\newif\ifsubmission
\newif\iffull
\newif\ifready
\newif\ifnotready
\newcounter{section-preserve}
\newcounter{theorem-preserve}
\newcommand{\blank}[1]{}
\newtoks\magicAppendix
\newtoks\magictoks
\newif\iflater
\long\def\later#1{\magictoks={#1}
  \edef\magictodo{\noexpand\magicAppendix={\the\magicAppendix \par
    \the\magictoks
  }}
  \magictodo}
\long\def\both#1{\magictoks={#1}
  \edef\magictodo{\noexpand\magicAppendix={\the\magicAppendix \par
    \noexpand\setcounter{theorem-preserve}{\noexpand\arabic{theorem}}
    \noexpand\setcounter{theorem}{\arabic{theorem}}
    \noexpand\setcounter{section-preserve}{\noexpand\arabic{section}}
    \noexpand\setcounter{section}{\arabic{section}}
    \noexpand\let\noexpand\oldsection=\noexpand\thesection
    \noexpand\def\noexpand\thesection{\thesection}
    \noexpand\let\noexpand\oldlabel=\noexpand\label
    \noexpand\let\noexpand\label=\noexpand\blank
    \the\magictoks
    \noexpand\setcounter{theorem}{\noexpand\arabic{theorem-preserve}}
    \noexpand\setcounter{section}{\noexpand\arabic{section-preserve}}
    \noexpand\let\noexpand\thesection=\noexpand\oldsection
    \noexpand\let\noexpand\label=\noexpand\oldlabel
  }}
  \magictodo
  \the\magictoks}
\def\magicappendix{\latertrue \the\magicAppendix}
\newtheorem{theorem}{Theorem}
\newtheorem{definition}[theorem]{Definition}
\newtheorem{lemma}{Lemma}
\newtheorem{challenge}{Challenge}
\newcommand{\myparagraph}[1]{\noindent {\bf #1.}}
\newcommand{\eps}{\varepsilon}
\newcommand{\pred}{\mathsf{Pred}}
\newcommand{\successor}{\mathsf{Succ}}
\newcommand{\poly}{\text{poly}}
\newcommand{\hash}{\mathcal{H}}
\newcommand{\tO}{\tilde{O}}
\newcommand{\dist}{\mathcal{D}}
\newcommand{\est}{\mathsf{CountDistinctEstimator}}
\newcommand{\subgraphset}{\mathcal{H}}
\newcommand{\batch}{B}
\newcommand{\bucket}{K}
\newcommand{\defn}[1]{\textbf{\emph{#1}}}
\newcommand{\Repeating}{Stale\xspace}
\newcommand{\Bad}{\Repeating}
\newcommand{\sgraph}[1]{G_{#1}}
\newcommand{\sedge}[1]{\EdgeSet{#1}}
\newcommand{\sgraphh}[2]{G_{#2}(#1)}
\newcommand{\sedgeh}[2]{\EdgeSetH{#1}{#2}}
\newcommand{\sv}[1]{\AncestorSet{#1}}
\newcommand{\svh}[2]{\AncestorSetH{#1}{#2}}
\newcommand{\smallparagraph}[1]{\noindent\emph{#1}.}
\newcommand{\Mergeable}{Mergeable\xspace}
\newcommand{\estv}[1]{\hat{a}(#1)}
\newcommand{\ev}[1]{\hat{a}(#1)}
\newcommand{\este}[1]{\hat{e}(#1)}
\newcommand{\ee}[1]{\hat{e}(#1)}
\newcommand{\estvh}[2]{\hat{a}_{#2}(#1)}
\newcommand{\esteh}[2]{\hat{e}_{#2}(#1)}
\newcommand{\ignore}[1]{}
\newcommand{\qq}[1]{{\color{purple} Quanquan: #1}}
\newcommand{\zoya}[1]{{\color{olive} Zoya: #1}}
\newcommand{\mnote}[1]{{\color{red} MP: #1}}
\newcommand{\josh}[1]{{\color{cyan} Josh: #1}}
\newcommand{\enote}[1]{{\color{violet} Erik: #1}}
\newcommand{\qq}[1]{}
\newcommand{\zoya}[1]{}
\newcommand{\mnote}[1]{}
\newcommand{\josh}[1]{}
\newcommand{\enote}[1]{}
\newcommand{\good}{fresh\xspace}
\newcommand{\bad}{stale\xspace}
\newcommand{\AncestorSet}[1]{\mathcal{A}(#1)}
\newcommand{\AncestorSetH}[2]{\mathcal{A}_{#2}(#1)}
\newcommand{\EdgeSet}[1]{\mathcal{E}(#1)}
\newcommand{\EdgeSetH}[2]{\mathcal{E}_{#2}(#1)}
\newcommand{\as}[1]{\AncestorSet{{#1}}}
\newcommand{\es}[1]{\EdgeSet{{#1}}}
\newcommand{\CommunicationDelay}{\rho}
\newcommand{\NumberVertices}{n}
\newcommand{\NumberEdges}{m}
\newcommand{\NumberMachines}{M}
\newcommand{\mergeable}{mergeable\xspace}
\newcommand{\estimationtime}{\ln^2 n}
\newcommand{\cS}{\mathcal{S}}
\newcommand{\opt}{\mathsf{OPT}}
\renewcommand{\emptyset}{\varnothing}
\title{Scheduling with Communication Delay in Near-Linear Time}
\author{Quanquan C. Liu}{MIT CSAIL \and \url{https://quanquancliu.com/}}{quanquan@mit.edu}{}{}
\author{Manish Purohit}{Google Research \and \url{https://sites.google.com/view/manishpurohit/home}}{purohitmanish89@gmail.com}{}{}
\author{Zoya Svitkina}{Google Research \and \url{https://sites.google.com/site/zoyasvitkina/}}{zoya@google.com}{}{}
\author{Erik Vee}{Google Research \and \url{https://scholar.google.com/citations?user=1u8drP0AAAAJ&hl=en}}{erikvee@google.com}{}{}
\author{Joshua R. Wang}{Google Research \and \url{https://sites.google.com/site/joshw0}}{joshuawang@google.com}{}{}
\keywords{near-linear time scheduling, scheduling with duplication, precedence-constrained jobs, graph algorithms}
\authorrunning{Q. C. Liu, M. Purohit, Z. Svitkina, E. Vee, J. R. Wang}
\author{Quanquan C. Liu, Manish Purohit, Zoya Svitkina, Erik Vee, Joshua R. Wang}{}{}{}
\date{}
\begin{document}
\sloppy
\raggedbottom
\maketitle

\begin{abstract}
  We consider the problem of efficiently scheduling jobs with precedence constraints on a set of
identical machines in the presence of a uniform communication delay. Such precedence-constrained jobs
can be modeled as a directed acyclic graph, $G = (V, E)$.
In this setting, if two
precedence-constrained jobs $u$ and $v$, with $v$ dependent on $u$ ($u \prec v$), are scheduled
on different machines, then $v$ must start at least $\rho$ time units after $u$ completes. The
scheduling objective is to minimize makespan, i.e.\ the total time from when the first job starts
to when the last job finishes. The focus of this paper is to provide an efficient approximation
algorithm with near-linear running time. We build on the algorithm of Lepere and Rapine [STACS
2002] for this problem to give an $O\left(\frac{\ln \rho}{\ln \ln \rho} \right)$-approximation
algorithm that runs in $\tilde{O}(|V| + |E|)$ time.
\end{abstract}

\ifsubmission
\newpage
\pagenumbering{arabic} 
\fi
\section{Introduction}
\label{sec:intro}

The problem of efficiently scheduling a set of jobs over a number of machines is a fundamental optimization problem in computer science that becomes ever more relevant as computational workloads become larger and more complex. 
Furthermore, in real-world data centers, there exists non-trivial \emph{communication delay} when data is
transferred between different machines. There is a variety 
of very recent literature devoted to the theoretical
study of this topic
~\cite{davies2020scheduling,DKRTZ21,maiti}. However, all such 
literature to date focuses on obtaining  algorithms with good approximation factors for the 
schedule length, but these algorithms require $\omega(n^2)$ time (and potentially polynomially
more) to compute the schedule. In this paper, we instead focus on efficient, 
near-linear time algorithms for scheduling while maintaining 
an approximation factor equal to that obtained by the 
best-known algorithm for our setting~\cite{LR02}.

Even simplistic formulations of the scheduling problem (e.g.\ precedence-constrained
jobs with unit length
to be scheduled on $M$ machines) are typically NP-hard,
and there is a rich body of literature on designing good approximation algorithms for the many variations of multiprocessor scheduling (refer to~\cite{brucker10} for a comprehensive history of 
such problems). Motivated by a desire to better understand the computational complexity of 
scheduling problems and to tackle rapidly growing input sizes, 
we ask the following research question:

\begin{displayquote}\emph{How computationally expensive is it to perform approximately-optimal scheduling?
}\end{displayquote}

In this paper, we focus on the classical problem of multiprocessor scheduling with communication delays on identical machines where all jobs have unit size. The jobs that need to be scheduled have data dependencies between them, where the output of one job acts as the input to another. These dependencies are represented using a directed acyclic graph (DAG) $G = (V, E)$ where each vertex $v \in V$ corresponds to a job and an edge $(u,v) \in E$ indicates that job $u$ must be scheduled before $v$. In our multiprocessor environment, if these two jobs are scheduled on different machines, then some additional time must be spent to transfer data between them. We consider the problem with \emph{uniform communication delay}; in this setting, a uniform delay of $\CommunicationDelay$ is incurred for transferring data between any two  machines. Thus for any edge $(u,v) \in E$, if the jobs $u$ and $v$ are scheduled on different machines, then $v$ must be scheduled at least $\CommunicationDelay$ units of time after $u$ finishes. Since the communication delay $\CommunicationDelay$ may be large, it may actually be more efficient for a machine to \emph{recompute} some jobs rather than wait for the results to be communicated. 
Such duplication of work can reduce schedule length by up to a logarithmic factor \cite{maiti} and 
has been shown to be effective in minimizing latency in schedulers for grid computing and cloud environments~\cite{bozdag2008compaction,casas2017balanced}. 
Our scheduling objective is to minimize the makespan of the schedule, i.e., the completion time of the last job. In the standard three field notation for scheduling problems, this problem is denoted ``$P \mid \textrm{duplication}, \textrm{prec}, p_j=1, c \mid C_{\max}$'',
\iffull
\footnote{The fields denote the following. \textbf{Identical machine information: }$P$: number, $M$, of machines is provided as input to the algorithm; \textbf{Job properties: } \textrm{duplication}: duplication is allowed; \textrm{prec}: precedence constraints; $p_j = 1$: unit size jobs; $c$: there is non-zero communication delay; \textbf{Objective:} $C_{\max}:$ minimize makespan.}
\fi
where $c$ indicates uniform communication delay.

This problem was  studied by Lepere and Rapine, who devised an $O(\ln \CommunicationDelay{} / \ln \ln \CommunicationDelay{})$-approximation algorithm for it \cite{LR02}, under the assumption that the optimal solution takes at least $\CommunicationDelay{}$ time. However, their analysis was primarily concerned with getting a good quality solution and less with optimizing the running time of their polynomial-time algorithm. A naïve implementation of their algorithm takes roughly $O(\NumberEdges{} \CommunicationDelay{} + \NumberVertices{} \ln \NumberMachines{})$ time,
where $n$ and $m$ are the numbers of vertices and edges in the DAG, respectively, and $M$ is the number of machines.
This runtime is based on two bottlenecks, (i) the computation of ancestor sets, which can be done in $O(\NumberEdges{} \CommunicationDelay{})$ time via propagating in topological order plus merging and (ii) list scheduling, which can be done in $O(\NumberVertices \ln \NumberMachines{})$ time by using a priority queue to look up the least loaded machine when scheduling a set of jobs.

However, with growing input sizes, it is highly desirable to obtain a 
scheduling algorithm whose running time is linear in the size of the 
input.
 Our primary contribution is to design a \emph{near-linear time}
 randomized approximation algorithm while preserving the 
 approximation ratio of the Lepere-Rapine algorithm:
\begin{theorem}\label{thm:1}
  There is an $O(\ln \CommunicationDelay{} / \ln \ln \CommunicationDelay{})$-approximation algorithm for scheduling jobs with precedence constraints on a set of identical machines in the presence of a uniform communication delay that runs in $O\left(n \ln M + \frac{\NumberEdges \ln^3 \NumberVertices \ln \CommunicationDelay}{\ln\ln \CommunicationDelay}\right)$ time, with high probability, assuming that the optimal solution has cost at least $\rho$.
\end{theorem}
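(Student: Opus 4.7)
The plan is to adapt the Lepere--Rapine framework while replacing its two main bottlenecks---exact ancestor-set computation and naive machine-load lookup---with near-linear time subroutines. The Lepere--Rapine algorithm classifies each vertex according to the size of its ancestor set at various ``heights'' in the DAG: vertices with many ancestors close to them are treated differently from those with few, and the scheduling decisions on each class are combined to yield the $O(\ln \rho/\ln\ln \rho)$ bound. To obtain the same approximation ratio in near-linear time, I would replace the exact ancestor counts with $(1 \pm \eps)$-approximate counts delivered by a count-distinct sketch, and implement the final list-scheduling phase with a min-heap.

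First, I would set up the approximate ancestor infrastructure. For each of the $O(\ln \rho/\ln\ln \rho)$ height levels, I would store at every vertex $v$ a sketch of size roughly $O(\ln n/\gamma^2)$ (a bottom-$k$ sample or a min-hash style structure over ancestor identifiers) that estimates the height-$h$ ancestor count of $v$ within a $(1\pm\eps)$ factor with high probability. These sketches can be assembled in topological order: the sketch of $v$ at level $h$ is obtained by merging $v$'s own token with the level-$(h-1)$ sketches of $v$'s in-neighbors. Each merge costs time proportional to the sketch size, so the total cost across the DAG is roughly $O(m \cdot \ln^2 n/\gamma^2)$ per level, and summing over levels gives an overall $O\!\left(\frac{m \ln^3 n \ln \rho}{\ln\ln \rho}\right)$ bound after the $\ln n$ blow-up for probability amplification discussed below.

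Second, I would feed these approximate counts into Lepere--Rapine's partitioning step and show that their multiplicative error only perturbs the thresholds used to split jobs into classes by a $(1\pm O(\eps))$ factor, which in turn costs only a constant factor in the approximation ratio. The actual scheduling of jobs onto machines can then be executed in the standard way, using a priority queue keyed by current machine load, for $O(n \ln M)$ total time. Adding the two pieces recovers exactly the runtime claimed in the theorem statement.

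The main obstacle, I expect, is the probabilistic analysis of the sketch. The estimator must be simultaneously accurate for all $n$ vertices and all $O(\ln \rho/\ln\ln \rho)$ levels, so each individual estimate needs success probability at least $1 - 1/\poly(n)$, obtained by $O(\ln n)$ independent repetitions (this accounts for one of the $\ln n$ factors in the running time and lets a union bound close the analysis). More delicately, because a vertex's sketch is built by merging in-neighbor sketches along arbitrary paths in the DAG, one must design the sketch so that the final state depends only on the \emph{set} of reachable ancestors at the target height, not on the order or multiplicity of merges; and one must show that Chernoff-style concentration still applies to the aggregated estimator regardless of which root-to-$v$ paths contributed. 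Establishing this order-independence, and the corresponding tail bound, is where the bulk of the technical work is likely to sit; once it is in hand, the plug-in into Lepere--Rapine and the runtime accounting are essentially mechanical.
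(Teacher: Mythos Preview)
Your proposal identifies one of the two bottlenecks correctly but misses the harder one, and you misjudge where the technical weight lies. Mergeable count-distinct sketches with order-independent state and high-probability guarantees are standard (the paper simply invokes the estimator of Bar-Yossef et al.\ and verifies mergeability in a few lines); this is not where the work is. What you call ``essentially mechanical'' is in fact the main obstacle.

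The Lepere--Rapine inner loop does not merely need the size of each $\AncestorSet{v}$; it needs, for every vertex $v$, the size of the \emph{intersection} $\AncestorSet{v}\cap\AncestorSet{\batch}$ with the set of ancestors already placed in the current batch, in order to decide whether $v$ is fresh or stale. This is a set-intersection estimation problem, and sketches of polylogarithmic size cannot solve it: when $|\AncestorSet{\batch}|$ can be as large as $\Theta(\rho^2)$ and $|\AncestorSet{v}|\le\rho$, known lower bounds force sketch size $\Omega(\rho)$ for a $(1\pm\eps)$ estimate, which kills the near-linear running time. So ``feed the approximate counts into Lepere--Rapine's partitioning step'' does not work as stated. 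The paper's actual solution is quite different: it buckets vertices by (estimated) ancestor-edge count, \emph{randomly samples} vertices within each bucket, explicitly DFS-enumerates the ancestors of each sampled vertex to test freshness exactly, and uses a charging argument (each stale DFS is paid for by the most recent fresh DFS in a bucket of at least the same size) together with a periodic \emph{pruning} pass (re-sketching on $H\setminus\AncestorSet{\batch}$) to ensure that only $O(\log n)$ consecutive stale samples are tolerated before many stale vertices are removed at once. None of this machinery appears in your plan, and without it the inner loop is either incorrect (wrong fresh/stale classification) or too slow (explicit intersection for every vertex costs $\Omega(m\rho)$).
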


Of course, this is tight, up to log factors, because any algorithm for this problem must respect the precedence constraints, which require $\Omega(\NumberVertices + \NumberEdges)$ time to read in.
In the settings where our algorithm is more efficient than Lepere-Rapine, the approximation factor
of the algorithm is still very small (near-constant in the cases where $\rho = \poly \log n$), yet our algorithm achieves
a better runtime while maintaining the same approximation compared to the previous best-known algorithm for the problem. 

\subsection{Related Work}
Algorithms for scheduling problems under different models have been studied for decades, and there is a rich literature on the topic (refer to~\cite{brucker10} for a comprehensive look). Here we review work on theoretical aspects of scheduling with communication delay, which is most relevant to our results.

Without duplication, scheduling a DAG of unit-length jobs with unit communication delay was shown to be NP-hard by Rayward-Smith \cite{rayward1987uet}, who also gave a 3-approximation for this problem. Munier and K\"{o}nig gave a $4/3$-approximation for an unbounded number of machines \cite{munier+k:schedule}, and Hanen and Munier gave a $7/3$-approximation for a bounded number of machines \cite{hanen+m}. Hardness of approximation results were shown in \cite{bampis+gk:schedule,hoogeveen+lv:schedule,picouleau}. In recent results, Kulkarni et al.\ \cite{kulkarni2020soda} gave a quasi-polynomial time approximation scheme for a constant number of machines and a constant communication delay, whereas Davies et al.\ \cite{davies2020scheduling} gave an $O(\log \rho \log M)$ approximation for general delay and number of machines. 
Even more recently, Davies et al.\ \cite{DKRTZ21} presented a $O(\log^4 n)$-approximation algorithm for
the problem of minimizing the weighted sum of completion times on \emph{related machines} in the presence of
communication delays. They also obtained a $O(\log^3 n)$-approximation algorithm under the same model
but for the problem of minimizing makespan
under communication delay. Notably, \emph{none} of the aforementioned algorithms consider duplication and the most recent
algorithms have running times that are large polynomials.

Allowing the duplication of jobs was first studied by Papadimitriou and Yannakakis \cite{papadimitriou1990towards}, who obtained a 2-approximation algorithm for scheduling a DAG of identical jobs on an unlimited number of identical machines. A number of papers have improved the results for this setting \cite{ahmad+k:schedule,darbha+a:schedule,palis1996task}. With a finite number of machines, Munier and Hanen \cite{munier+h:schedule} proposed a 2-approximation algorithm for the case of unit communication delay, and Munier \cite{munier1999approximation} gave a constant approximation for the case of tree precedence graphs. For a general DAG and a fixed delay $\rho$, Lepere and Rapine \cite{LR02} gave an algorithm that finds a solution of cost $O(\log \rho / \log \log \rho) \cdot (OPT + \rho)$, which is a true approximation if one assumes that $OPT \geq \rho$. This is the main result that our paper builds on. It applies to a set of identical machines and a set of jobs with unit processing times. Recently, an $O(\log M \log \rho / \log \log \rho)$ approximation has been obtained for a more general setting of $M$ machines that run at different speeds and jobs of different lengths by Maiti et al.\ \cite{maiti}, also under the assumption that $OPT \geq \rho$.
However, the running time of this algorithm is a large polynomial ($\omega(n^2)$),
as it requires solving an LP with $\Omega(Mn^2)$ variables.

Our results so far only apply to scheduling with duplication. In Maiti et al.\ \cite{maiti}, a polynomial-time reduction is presented that transforms a schedule with duplication into one without duplication (with a polylogarithmic increase in makespan). However, this reduction involves constructing an auxiliary graph of possibly $\Omega(\rho^2)$ size, and thus does not lend itself easily to a near-linear time algorithm. It would be interesting to see if a near-linear time reduction could be found.

\subsection{Technical Contributions}

A na\"{i}ve implementation of the Lepere-Rapine algorithm is bottlenecked by the need to determine the set of all ancestors of a vertex $v$ in the graph, as well as the intersection of this set with a set of already scheduled vertices. Since the ancestor sets may significantly overlap with each other, trying to compute them explicitly (e.g., using DFS to write them down)  results in superlinear work. We use a variety of technical ideas to only compute the essential size information that the algorithm needs to make decisions about these ancestor sets.

\begin{itemize}
  \item \textbf{Size estimation via sketching.} We use streaming techniques to quickly estimate the sizes of all ancestor sets simultaneously. It costs $O((|V| + |E|)\log^2 n)$ time
  to make such an estimate once, so we are careful to do so sparingly.
  \item \textbf{Work charging argument.} Since we cannot compute our size estimates too often, we still need to perform some DFS for ancestor sets. We control the amount of work spent doing so by carefully charging the edges searched to the edges we manage to schedule.

  \item \textbf{Sampling and pruning.} 
  Because we cannot brute-force search all ancestor sets, 
  we randomly sample vertices,
  using a consecutive run of unscheduleable vertices
  as evidence that many vertices are not schedulable.
  This allows us to pay for an expensive size-estimator computation to prune many ancestor sets simultaneously.
\end{itemize}

\subsection{Organization} The main contribution of this paper is our 
algorithm for scheduling small subgraphs in near-linear time. 
We provide a detailed description and analysis
of this algorithm  in~\cref{sec:detailed-alg}.
Then, we proceed with our algorithm for scheduling general graphs in
\cref{sec:general}.

\section{Problem Definition and Preliminaries}
\label{sec:prelims}

An instance of scheduling with communication delay is specified by a directed acyclic graph $G = (V,E)$, a quantity $M \ge 1$ of identical machines, 
and an integer communication delay $\rho > 1$. We assume that time is slotted and let $T = \{1, 2, \ldots\}$ denote the set of integer times. Each vertex $v \in V$ corresponds to a job with processing time 1 and a directed edge $(u, v) \in E$ represents the precedence constraint that job $v$ depends on job $u$. In total, there are $\NumberVertices{} = |V|$ vertices (representing jobs) and $\NumberEdges{} = |E|$ precedence constraints. 
The parameter $\CommunicationDelay{}$ indicates the amount of time required to communicate the result of a job computed on one machine to another. In other words, a job $v$ can be scheduled on a machine at time $t$ only if all jobs $u$ with $(u, v) \in E$ have either completed on the same machine before time $t$ or on another machine before time $t - \CommunicationDelay{}$. We allow for a job to be \emph{duplicated}, i.e., copies of the same vertex $v \in V$ may be processed on different machines. Let $\mathcal{M}$
be the set of machines available to schedule the jobs. A schedule $\sigma$ is represented by a set of triples $\{(m, v, t)\} \subset \mathcal{M} \times V \times T$ where each triple represents that job $v$ is scheduled on machine $m$ at time $t$.
The goal is to obtain a feasible schedule that minimizes the makespan, i.e., the completion time of the last job. Let $\opt$ denote the makespan of an optimal schedule. Since $\CommunicationDelay{}$ represents the amount of time required to communicate between machines, and in practice, any schedule must communicate the results of the computation, we assume that $\opt \ge \CommunicationDelay$ as is standard in literature~\cite{LR02, maiti}.

\begin{table}[htb]
    \centering
    \begin{tabular}{c l}
    \toprule
         Symbol & Meaning \\
         \midrule
         $G = (V, E)$ & main input graph  \\
         $\NumberVertices = |V|, \NumberEdges{} = |E|$ & number of vertices / edges \\ 
         $H = (V_H, E_H)$ & subgraph to be scheduled in each phase\\
         $\CommunicationDelay$ & communication delay \\
         $u, v$ & vertices \\
         $\AncestorSetH{v}{H}$ & set of ancestors of vertex $v$ in graph $H$ including $v$\\
         $\AncestorSetH{S}{H}$ & $\AncestorSetH{S}{H} = \bigcup_{v \in S} \AncestorSetH{v}{H}$ in graph $H$\\
         $\sedgeh{v}{H}$ [resp., $\sedgeh{S}{H}]$ & edges induced by $\AncestorSetH{v}{H}$ [resp., $\AncestorSetH{S}{H}$] in graph $H$\\
         $\estvh{v}{H}, \esteh{v}{H}$ & estimated size of $\svh{v}{H}$ and $\sedgeh{v}{H}$\\ 
         $\NumberMachines{}$ & number of machines\\
         $\gamma$ & threshold for \good vs. \bad vertices\\
         \bottomrule
    \end{tabular}
    \caption{Table of Symbols}
    \label{tab:symbol_table}
\end{table} 

We now set up some notation to help us better discuss dependencies arising from the precedence constraints of $G$. For any vertex $v \in V$, let $\pred(v) \triangleq \{u \in V \mid (u,v) \in E\}$ be the set of (immediate) predecessors of $v$ in the graph $G$, and similarly let $\successor(v) \triangleq \{w \in V \mid (v,w) \in E\}$ be the set of (immediate) successors.
For $H = (V_H, E_H)$, a subgraph of $G$, we use $\AncestorSetH{v}{H} \triangleq \{u \in V_H \mid \exists \text{ a directed path from $u$ to $v$ in $H$}\} \cup \{v\}$ to denote the set of (indirect) ancestors of $v$, including $v$ itself. Similarly, for $S\subseteq V$, we use $\AncestorSetH{S}{H} \triangleq \bigcup_{v \in S} \AncestorSetH{v}{H}$ to denote the indirect ancestors of the entire set $S$. 
We use $\sedgeh{S}{H}$ to denote the edges of the subgraph induced by $\AncestorSetH{S}{H}$. We drop the subscript $H$ when the subgraph $H$ is clear from context.
Throughout, we use the phrase \emph{with high probability} 
to indicate with probability at least $1 - \frac{1}{n^c}$ for any constant $c \geq 1$.

For convenience, we summarize the notation we
use throughout the paper in \cref{tab:symbol_table}.

\section{Technical Overview}\label{sec:overview}
We start by reviewing the algorithm of Lepere and Rapine \cite{LR02}, shown in Algorithm \ref{alg:lr}, as our algorithm follows a similar outline. Then we describe the technical improvements
of our algorithm to achieve near-linear running time.

\begin{algorithm}[hbt]
\caption{Outline of Lepere Rapine Scheduling Algorithm  \cite{LR02}}
\label{alg:lr}
\LinesNumbered
\While{$G$ is non-empty} {\label{line:outer-loop}
    Let $H$ be a subgraph of $G$ induced by vertices with at most $\rho + 1$ ancestors\\
    \While{$H$ is non-empty}{\label{line:inner-loop}
        \For{each vertex $v$ in $H$}{
            \If{greater than $\gamma$ fraction of $\AncestorSetH{v}{H}$ is unscheduled}{
                Add $\AncestorSetH{v}{H}$, in topological order, to a machine with earliest end time
            }
        }
        Insert a delay until $C+\rho$ on all machines, where $C$ is the latest end time\\
        Remove scheduled vertices from $H$
    }
    Delete vertices in $H$ from $G$
}
\end{algorithm}

\begin{figure}[htb]
     \centering
     \begin{subfigure}[b]{0.35\textwidth}
         \centering
         \includegraphics[width=\textwidth]{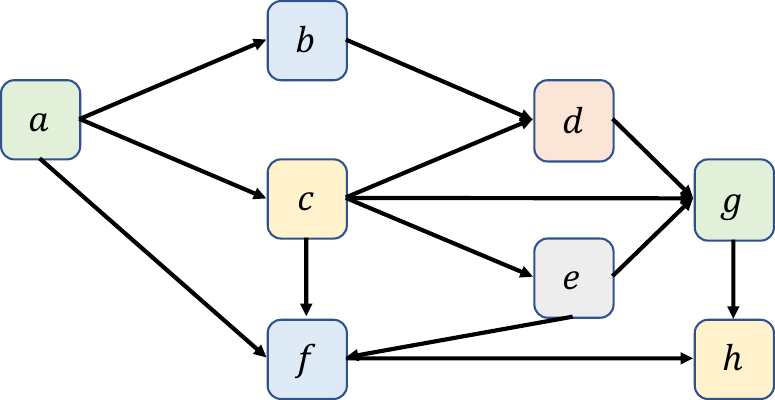}
         \caption{Initial input DAG.}
         \label{fig:initial-graph}
     \end{subfigure}
     \hfill
     \begin{subfigure}[b]{0.55\textwidth}
         \centering
         \includegraphics[width=\textwidth]{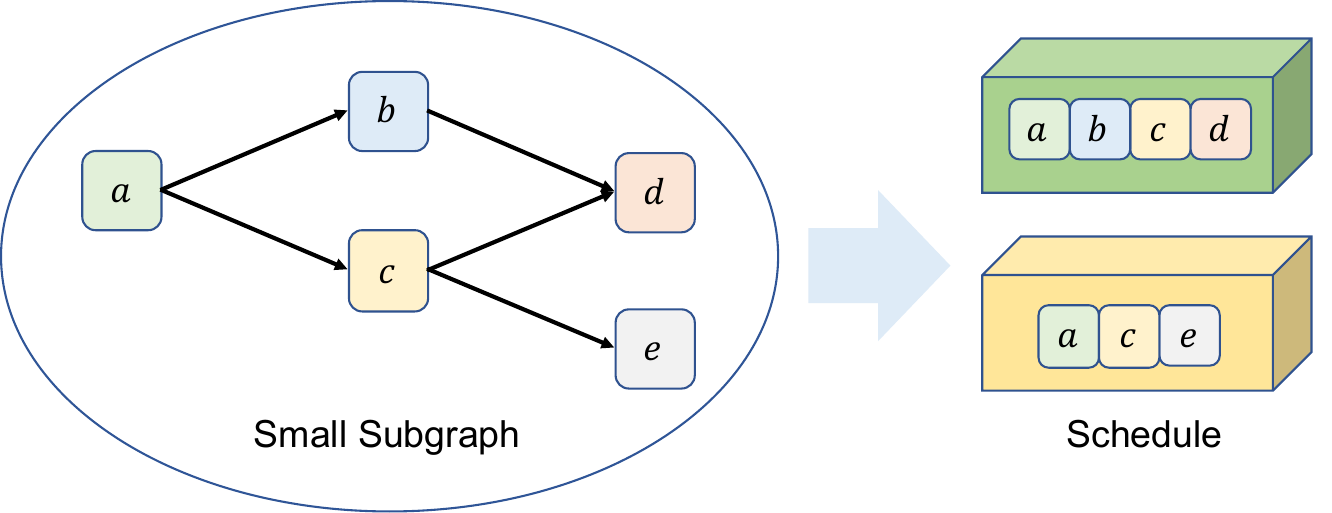}
         \caption{Find a small subgraph in the input graph and schedule the small subgraph.}
         \label{fig:schedule-small}
     \end{subfigure}
     \par\bigskip
     \begin{subfigure}[b]{0.6\textwidth}
         \centering
         \includegraphics[width=\textwidth]{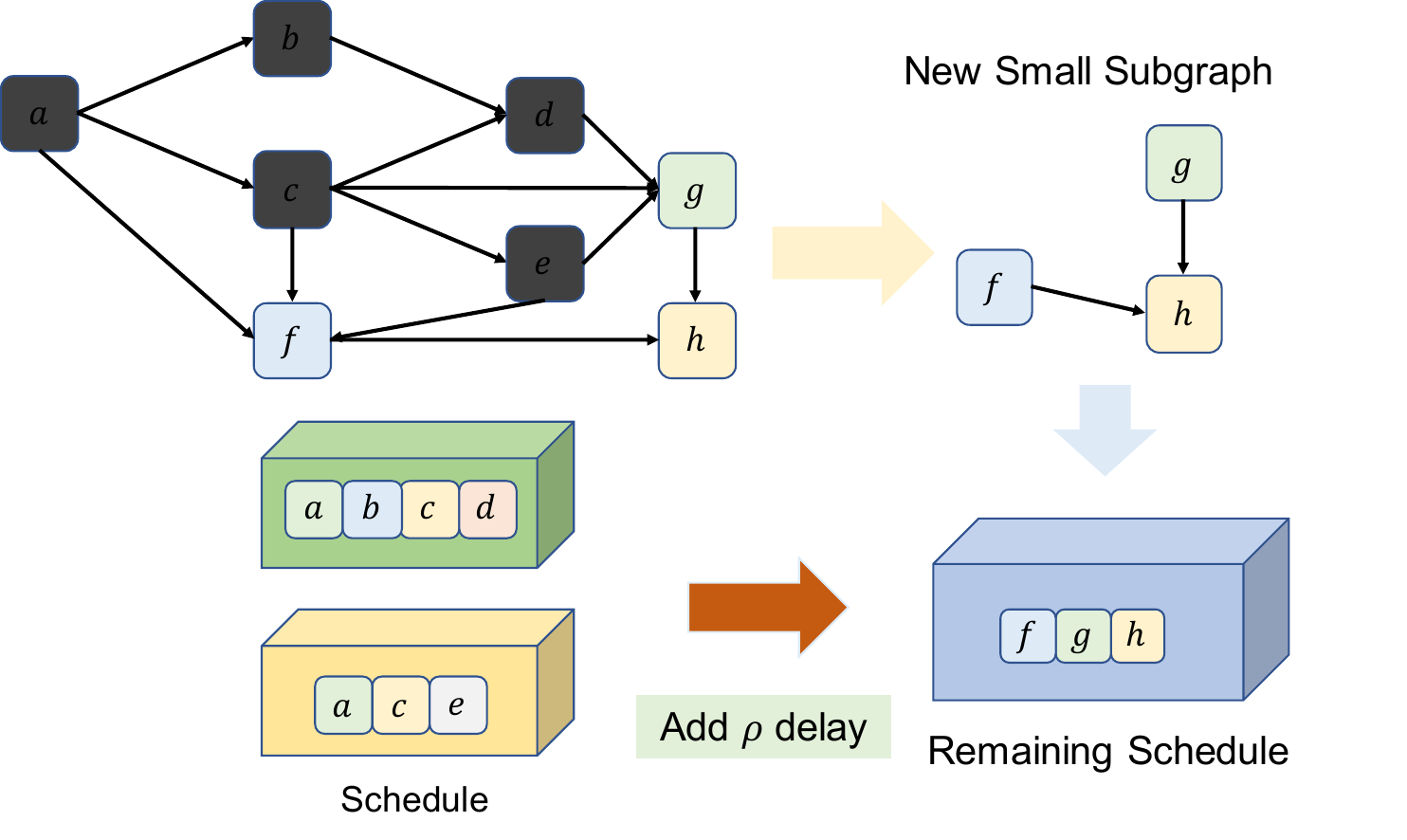}
         \caption{Remove scheduled vertices from the graph. Add a $\rho$ communication delay to the schedule after the previously scheduled jobs.
         Find a new small subgraph and
         schedule it.}
         \label{fig:removed-scheduled}
     \end{subfigure}
    \caption{Overview of the Lepere-Rapine algorithm for scheduling general graphs.}
    \label{fig:main-alg-overview}
\end{figure}

\myparagraph{Description of Lepere-Rapine~\cite{LR02}}
The outer loop (\cref{line:outer-loop}) iteratively finds \emph{small subgraphs} of $G$ which consist of vertices that have \emph{height} at most $\rho + 1$. We show in this paper that instead of considering their definition of \emph{height}, it is sufficient in our
algorithm
to consider small subgraphs to be those with at most $2\rho$ ancestors. We call one iteration of this loop a \emph{phase}. Within the phase, $H$ is fully scheduled, after which the algorithm goes on to the next ``slice'' of $G$. However, $H$ is not scheduled all at once, but instead each iteration of the inner \texttt{while} loop (\cref{line:inner-loop}) schedules a subset of $H$, which we call a \emph{batch}. To determine which vertices of $H$ make it into a batch, the algorithm checks the fraction of ancestors of each vertex that have already been scheduled in the same batch.
If this fraction for a vertex $v$ is low (we call $v$ \emph{\good} in that case), then its ancestor set $\AncestorSetH{v}{H}$ is list-scheduled as a unit,
i.e.\ ancestor jobs are duplicated, topologically sorted, and placed on one machine. If the fraction of scheduled ancestors is high (in which case we call $v$ \emph{\bad}), $v$ is skipped in this iteration.
We skip $v$ to avoid excessive duplication
that would create too much load on the machines.
After each batch is placed on the machines, a delay of $\rho$ is added to the end of the schedule to allow all the results to propagate. This allows the scheduled jobs to be deleted from $H$.
This algorithm is illustrated pictorially in~\cref{fig:main-alg-overview}.

\myparagraph{Runtime Challenges with Lepere-Rapine}
Naively, both finding the small subgraphs as well as determining each batch takes $\Omega(n\rho^2)$ time.
Determining which nodes belong in the current small subgraph is a matter of whether their ancestor counts are more than $\rho$ or at most $\rho$.
A standard procedure would be to apply DFS and merge ancestor sets, but that can easily run in $\Omega(\rho^2)$ time per node (a node may have $\Omega(\rho)$ direct parents, each with an ancestor size of $\Omega(\rho)$ that needs to get merged in).

The other technical hurdle is in determining the batches to schedule. We would like to schedule vertices whose ancestors do not \emph{overlap too much}. To illustrate the difficulty of applying sketching-based methods (e.g. min-hash), consider the following example. Suppose that $\rho^2$ elements have already been scheduled in this batch. Now, we want to find the number of ancestors of vertex $v$,
$\AncestorSet{v}$, that intersect
with the currently scheduled batch, where $|\AncestorSet{v}| \leq \rho$
by construction. By the lower bound given in~\cite{PSW14}, even estimating (up to $1 \pm \epsilon$ relative error with constant probability) the size of this intersection would require sketches of size at least
$\eps^{-2}(\rho^2/\rho) = \eps^{-2}\rho$. Using such $\rho$-sized sketches over all batches and all small subgraphs require $\Omega(n\rho)$ time in total.

Since $\rho$ may be super-logarithmic, these naive implementations don't quite meet our goal of a near-linear time algorithm. To summarize, the two main technical challenges for our setting are the following:

\begin{challenge}\label{chal:schedule}
    We must be able to find the small subgraphs in near-linear time.
\end{challenge}

\begin{challenge}\label{chal:batch}
    We must be able to find the vertices to add to each batch $\batch$ in near-linear time.
\end{challenge}

We solve~\cref{chal:schedule} by relaxing the definition of small subgraph and using \emph{count-distinct} estimators (discussed in~\cref{sec:estimators}).
The majority of our paper focuses on solving~\cref{chal:batch} which
requires several new techniques for the problem
outlined in the rest of this section (\cref{sec:sampling} and \cref{sec:pruning}).
The below procedures run on a small subgraph,
$H = (V_H, E_H)$, where
the number of ancestors of each vertex is bounded by $2\rho$. Note the factor of $2$ results from
our count-distinct estimator. This is described in~\cref{sec:detailed-alg}. Our algorithm for scheduling
small subgraphs is shown pictorially in~\cref{fig:alg-overview}.

\subsection{Sampling Vertices to Add to the Batch}\label{sec:sampling}

We first partition the set of unscheduled vertices in $V_H$ into buckets based on the estimated number of edges in the subgraph induced by their ancestors. (We place vertex $v$---if it has no
ancestors---into the bucket with the smallest index.)
We partition by edges instead of vertices because the number of edges
in the induced subgraph of the ancestors affects our running time.
More formally, let $S_i$ be the set of vertices not yet scheduled in iteration $i$ (\cref{line:inner-loop}, \cref{alg:lr}). We partition $S_i$ into $k = O(\log \rho)$ buckets $K_1, \ldots, K_k$ such that bucket $K_j$ contains all vertices $w \in S_i$ where $\este{w} \in [2^j, 2^{j+1})$; $\este{w}$ denotes the estimated number of edges in the subgraph induced by ancestors of $w$.

From each bucket $\bucket_j$, in decreasing order of $j$, we  sample vertices, sequentially,
without replacement.
For each sampled vertex $v$,
we enumerate its ancestors
and determine how many are in the current batch $\batch$.
If at least a $\gamma$-fraction of
the vertices \emph{are not in $\batch$ and} at least a $\gamma$-fraction of
the edges (with both endpoints in $B$)
in the induced subgraph $\sgraphh{v}{H_i}$ are
\emph{not in} $\batch$,
then add $v$ and all its ancestors to
$\batch$. We call such a vertex $v$ \defn{\good}.
Otherwise,
we do not add $v$
to $\batch$ and
label this vertex as
\defn{\bad}.
For our algorithms, we set $\gamma = \frac{1}{\sqrt{\rho}}$ to
minimize the approximation factor but
$\gamma$ can be set to any value
$\gamma < 1/2$.
Lepere-Rapine did not consider edges in their algorithm because
the number of edges in the induced subgraph does not
affect the schedule length; however,
considering edges is crucial for our algorithm to run in near-linear time.

For each bucket sequentially, we sample vertices
uniformly at random,
until we have sampled
$O(\log n)$ consecutive vertices
that are \emph{\bad}
(or we have run
out of vertices and the bucket is empty).
Then, the key intuition is that for every $v$ that we add to
$\batch$, we can afford to \emph{charge the cost of
enumerating the ancestor set} for $O(\log n)$ additional vertices
in the same bucket as well as $O(\log n)$ additional vertices in each bucket with smaller $j$ to it. Because we are looking at
buckets
with decreasing indices, we can charge the
additional vertices found in future buckets
to the most recently found \good vertex.
\iffull
To see why we can charge the samples from buckets with smaller $i$,
suppose that one vertex $v$ in bucket $i$ was added to $B$ and no vertices
in buckets $(i, \log \rho]$ were added to $B$. Then, the cost charged to
$v$ of enumerating the $O(\log n)$ ancestor sets in buckets $[i, \log \rho]$
is at most $\sum_{j = i}^{\log \rho} (2^{\log \rho - j}) \log n =
O(2^{\log \rho - i} \log n)$, asymptotically
the same cost as charging the sampled vertices from bucket $i$.
\fi

\subsection{Pruning All \Bad Vertices from Buckets}\label{sec:pruning}
After we have performed the sampling procedure, we are still
not done. Our goal is to make sure that
\emph{all vertices which are not included} in $\batch$ are approximately \bad.
This means that we must remove the \bad
vertices so
that we can perform our sampling procedure
again in a smaller
sample space in order to find additional
\good vertices.
To accomplish this, we perform a \defn{pruning} procedure
involving re-estimating the ancestor sets consisting of vertices
that have not been added to the batch.
Using these estimates, we remove \emph{all} \bad
vertices
from our buckets. Note that we
\emph{do not rebucket the vertices}
because none
of the ancestor sets of the vertices changed
sizes.
Then, we perform
our sampling procedure above (again) to find more \good vertices. The key is that since
we removed all \bad vertices, \emph{the first
sampled vertex from the non-empty bucket with the largest index
is \good}.
\\\\
We perform the above sampling and pruning
procedures until each
bucket is empty.
Then, we schedule the batch
and remove all scheduled vertices from
$H$ and proceed again with the procedure until
the graph is empty.
We perform a standard simple
greedy list scheduling algorithm (\cref{app:list-scheduling})
on our batch
on $M$ machines.
\vspace{1em}
\begin{figure}[htb!]
     \centering
     \begin{subfigure}[b]{0.35\textwidth}
         \centering
         \includegraphics[width=\textwidth]{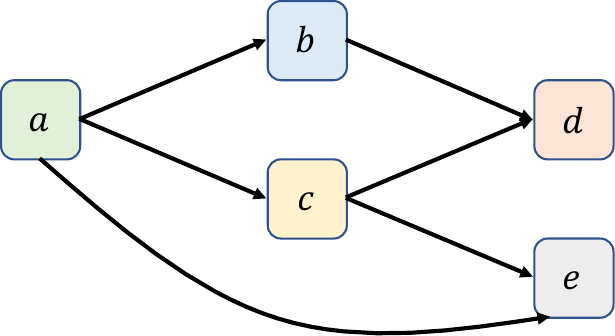}
         \caption{Input small subgraph.}
         \label{fig:initial-small-graph}
     \end{subfigure}
     \hfill
     \begin{subfigure}[b]{0.4\textwidth}
         \centering
         \includegraphics[width=\textwidth]{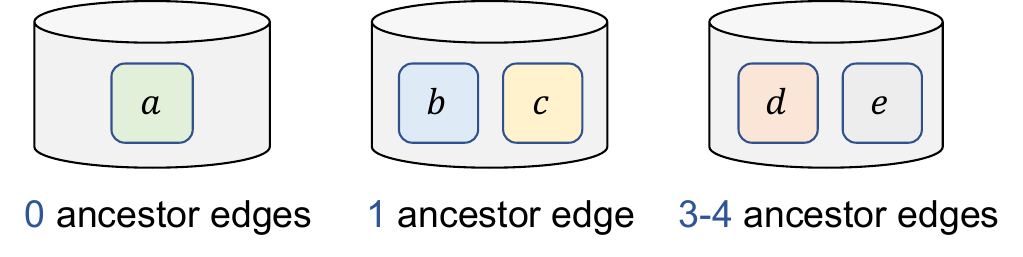}
         \caption{Vertices are bucketed according to the estimate of the number of edges in the induced subgraph of its ancestors.}
         \label{fig:bucketing}
     \end{subfigure}
     \par\bigskip
     \begin{subfigure}[b]{0.7\textwidth}
         \centering
         \includegraphics[width=\textwidth]{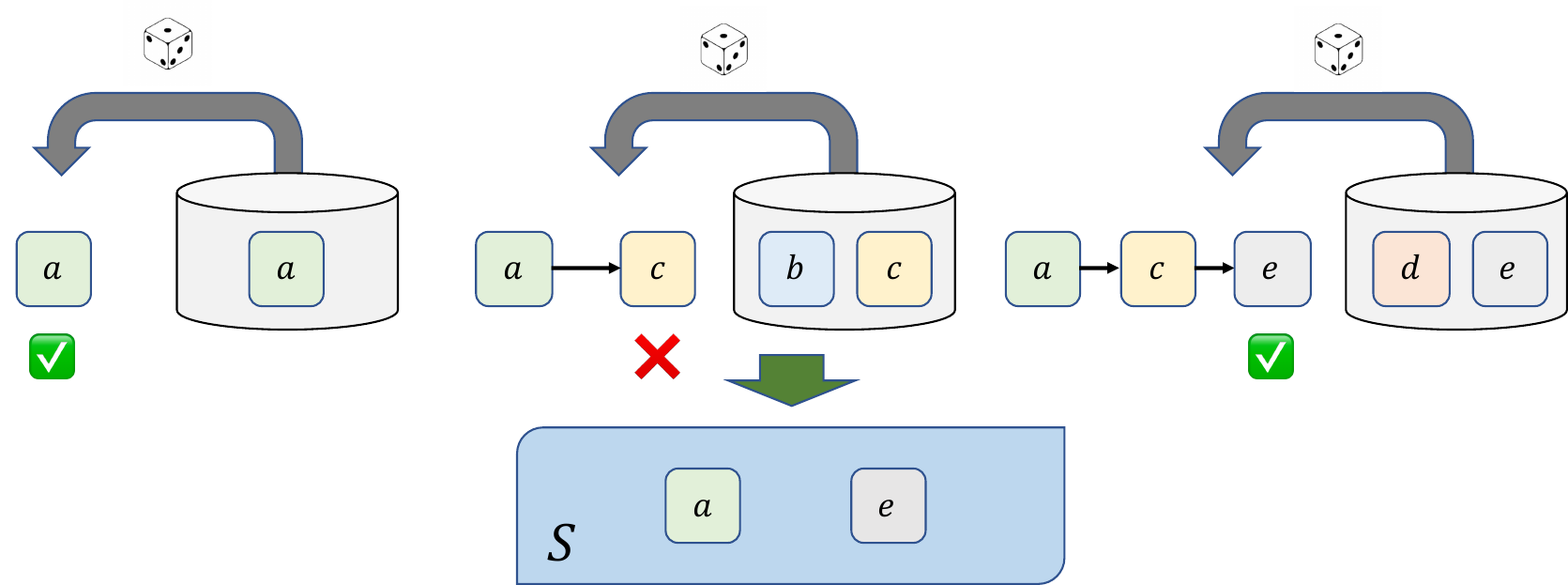}
         \caption{Vertices are uniformly at random sampled from buckets. Then, vertices which have sufficiently many ancestors and ancestor edges not in $S$ are added to $S$.}
         \label{fig:sampling}
     \end{subfigure}
     \par\bigskip
    \begin{subfigure}[b]{0.4\textwidth}
         \centering
         \includegraphics[width=\textwidth]{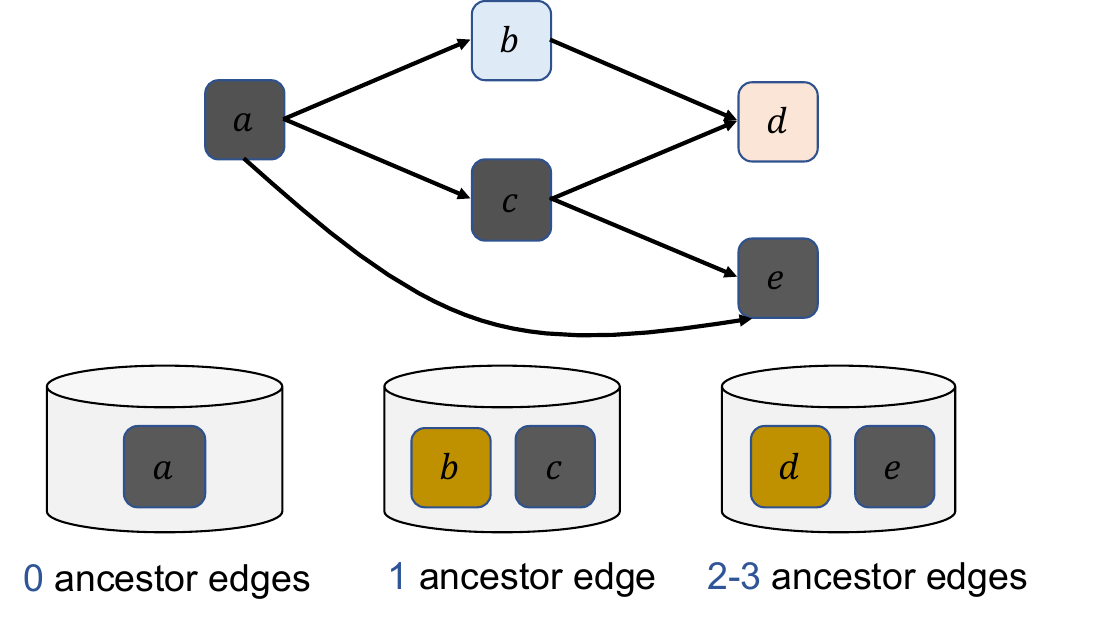}
         \caption{Vertices which are in $S$ or have a large proportion of ancestors or ancestor edges in $S$ are pruned from buckets. $b$ and $d$ are pruned in this example.}
         \label{fig:pruning}
     \end{subfigure}
     \hfill
    \begin{subfigure}[b]{0.3\textwidth}
         \centering
         \includegraphics[width=0.6\textwidth]{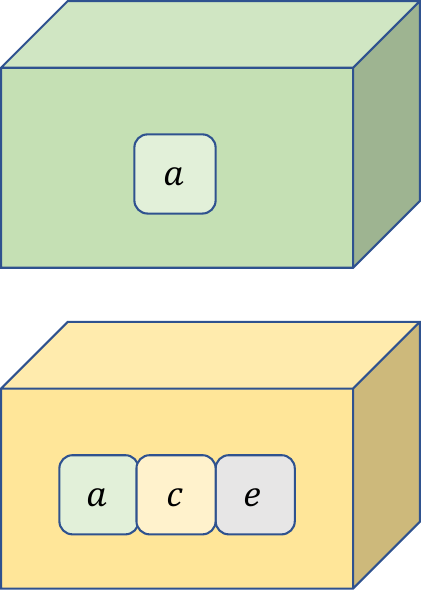}
         \caption{Vertices in $S$ and all ancestors are scheduled by duplicating ancestors and list scheduling.}
         \label{fig:list-schedule}
     \end{subfigure}
        \caption{Overview of our scheduling small subgraphs algorithm. We choose $\gamma = 2/3$ here for illustration purposes but in our algorithms $\gamma < 1/2$.}
        \label{fig:alg-overview}
\end{figure}

\iffull
\fi

\section{Estimating Number of Ancestors}
\label{sec:estimators}

Let $\tilde{G} = (\tilde{V}, \tilde{E})$ be an arbitrary directed, acyclic graph. We first present our algorithm to estimate the number of ancestors of any vertex $v \in \tilde{V}$. Consider any vertex $v \in \tilde{V}$ and let $p_1, p_2, \ldots p_\ell$ be the predecessors of $v$ in $\tilde{G}$.
Then we have $\AncestorSetH{v}{\tilde{G}} = \cup_{i=1}^\ell \AncestorSetH{p_i}{\tilde{G}} \cup \{v\}$ and hence
$|\AncestorSetH{v}{\tilde{G}}|$ is the number of distinct
elements in the multiset $\cup_{i=1}^\ell
\AncestorSetH{p_i}{\tilde{G}} \cup \{v\}$.
In order to estimate
$|\AncestorSetH{v}{\tilde{G}}|$ efficiently, we use a
procedure to estimate the number of distinct elements in a
data stream. This problem,  known as the \emph{count-distinct problem}, is well studied and many efficient estimators exist
\cite{AMS96,BJKST02,WVT90,FFGM12,KNW10}. Since we need to
estimate $|\AncestorSetH{v}{\tilde{G}}|$ for all vertices $v
\in \tilde{V}$ in near-linear time, we require an additional
\emph{\mergeable} property to ensure that we can efficiently
obtain an estimate for $|\AncestorSetH{v}{\tilde{G}}|$ from
the estimates of the parent ancestor set sizes
$\{|\AncestorSetH{p_1}{\tilde{G}}|, \ldots,
|\AncestorSetH{p_\ell}{\tilde{G}}|\}$.

We formally define the notion of a
\emph{count-distinct estimator} and the \mergeable property.

\begin{definition}\label{def:distinct-count}
For any multiset $\cS$, let $|\cS|$ denote the number of distinct elements in $\cS$. We say $T$ is an $(\eps, \delta, D)$-$\est$ for $\cS$ if it uses space $D$ and returns a value $\hat{s}$ such that $(1 - \eps)|\cS| \leq \hat{s} \leq (1 + \eps)|\cS|$ with probability at least $(1-\delta)$.
\end{definition}

\begin{definition}[\Mergeable Property]\label{def:anc-estimator}
An $(\eps, \delta, D)$-$\est$ exhibits the \emph{\mergeable property} if estimator
$T_1$ for multiset $\mathcal{S}_1$ and estimator $T_2$ for multiset $\mathcal{S}_2$ can be merged
in $O(D)$ time using $O(D)$ space into an $(\eps, \delta, D)$-estimator for $\mathcal{S}_1 \cup
\mathcal{S}_2$.
\end{definition}

We note that the \emph{count-distinct estimator} in \cite{BJKST02} satisfies the \mergeable property and suffices for our purposes. We include a description of the procedure and a proof of the \mergeable property in \cref{sec:estimator}.

\begin{lemma}[\cite{BJKST02}]\label{lem:estimator-bound}
For any constant $\eps > 0$ and $d \geq 1$, there exists an $\left(\eps, \frac{1}{n^d}, O\left(\frac{1}{\eps^2}\log^2 n\right)\right)$-$\est$
that satisfies the mergeable property where $n$ denotes an upper bound on the number of distinct elements.
\end{lemma}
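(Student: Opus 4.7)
The plan is to instantiate the BJKST count-distinct sketch, argue its standard accuracy guarantee, amplify its success probability by running independent copies and taking the median, and finally observe that the primitive buffer operation it maintains is closed under union.

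First I would describe the sketch. Fix a pairwise-independent hash family $h:U\to\{0,1,\dots,n^{O(1)}\}$ and let $\mathsf{zeros}(h(x))$ denote the number of trailing zeros of $h(x)$. A single copy maintains a level $z\ge 0$ and a buffer $B$ containing every distinct element $x$ seen so far with $\mathsf{zeros}(h(x))\ge z$, subject to $|B|\le c/\eps^2$ for a suitable constant $c$. Whenever inserting an element causes the buffer to exceed this size, $z$ is incremented and all stored elements with $\mathsf{zeros}(h(x))<z$ are evicted; this is repeated until the buffer fits. The output of one copy is $\hat s = |B|\cdot 2^{z}$. Each copy uses $O((1/\eps^2)\log n)$ bits: the hash descriptor is $O(\log n)$ bits and the buffer holds $O(1/\eps^2)$ hash values of $O(\log n)$ bits apiece.

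Next I would invoke the accuracy analysis from \cite{BJKST02} essentially verbatim: for the random choice of $h$, the event that $\hat s\notin (1\pm\eps)|\cS|$ is controlled by two bad sub-events (too many or too few light-hashed distinct elements at the final level $z$), each of which is bounded using pairwise-independence (second-moment/Chebyshev) to give failure probability at most, say, $1/3$ for one copy. To push the failure probability down to $1/n^d$, I would run $k = \Theta(d\log n)$ independent copies with independently drawn hashes and report the median of their outputs: a standard Chernoff bound on the indicator variables ``copy $i$ is within $(1\pm\eps)$'' shows the median is within $(1\pm\eps)$ except with probability $e^{-\Omega(k)}\le n^{-d}$. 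Summing storage over the $k$ copies gives the claimed $O((1/\eps^2)\log^2 n)$ space bound.

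For the \mergeable property I would share the hash functions across all estimators derived from the same instantiation (this is the key modeling assumption; it is consistent with Definition~\ref{def:anc-estimator} since the family is fixed once and for all in~\cref{sec:estimators}). Given $T_1$ for $\cS_1$ with level $z_1,B_1$ and $T_2$ for $\cS_2$ with level $z_2,B_2$, the merge for a single copy proceeds per coordinate: set $z \gets \max(z_1,z_2)$; form $B\gets\{x\in B_1\cup B_2:\mathsf{zeros}(h(x))\ge z\}$; while $|B|>c/\eps^2$, increment $z$ and evict. One verifies by induction on the combined stream that the resulting $(z,B)$ is exactly the state the single-copy algorithm would have reached on any interleaving of $\cS_1$ and $\cS_2$, so the merged sketch has the same distribution as a fresh sketch on $\cS_1\cup\cS_2$; applying this coordinate-wise and re-taking the median yields a valid $(\eps,1/n^d,D)$-estimator for the union. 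Since each coordinate touches $O(1/\eps^2)$ hash values and there are $O(\log n)$ coordinates, the merge runs in $O(D)$ time and $O(D)$ space.

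The only delicate point, and the one I would single out as the main obstacle, is justifying that the merged sketch is \emph{distributionally identical} to the sketch built from scratch on $\cS_1\cup\cS_2$. This requires (i) sharing hashes across the estimators being merged and (ii) checking that the level-raising rule is monotone and order-independent, so that the state depends only on $\{(x,h(x)):x\in\cS_1\cup\cS_2\}$ and not on the order or split of insertions. With that observation in place the accuracy guarantee of \cite{BJKST02} transfers directly to the merged estimator.
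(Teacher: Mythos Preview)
Your proposal is correct and follows essentially the same outline as the paper: instantiate a BJKST sketch, amplify to failure probability $n^{-d}$ via $\Theta(\log n)$ independent copies and the median trick, and obtain mergeability by fixing the hash functions once so that each copy's state depends only on the underlying set. The only cosmetic difference is that the paper uses the ``$t$ smallest hash values'' variant of BJKST (maintaining the $t=c/\eps^2$ minima of $h$ in a balanced binary tree and returning $tN/\ell$), whereas you use the equivalent trailing-zeros/level variant; both appear in \cite{BJKST02}, and your order-independence argument for the merged state is exactly the counterpart of the paper's ``take the $t$ smallest of $T_{1,i}\cup T_{2,i}$'' step.
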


Given such an estimator, one can readily estimate the number of ancestors of each vertex $v \in \tilde{V}$ in near-linear time by traversing the vertices of the graph in topological order. An estimator for vertex $v$ can be obtained by \emph{merging} the estimators for each predecessor of $v$.
Similarly, we can also estimate the number of edges $|\sedge{v}|$ in the subgraph induced by ancestors of any vertex $v$ in near-linear time. We defer a detailed description of these procedures to \cref{alg:ancestor-estimation}
\ifsubmission
in~\cref{sec:estimator} and Algorithm 8  in our
full paper~\cite{fullversion}.
\else
and \cref{alg:ancestor-edge-estimation} in \cref{sec:estimator}.
\fi

\begin{lemma}\label{lem:ancestor-est-correct}
Given any input graph $\tilde{G} = (\tilde{V}, \tilde{E})$ and constants $\eps > 0, d \geq 1$, there exists an algorithm that runs in $O\left((|\tilde{V}| + |\tilde{E}|)\log^2 n\right)$ time and returns estimates $\estv{v}$ and $\este{v}$ for each $v \in \tilde{V}$ such that $(1-\eps)|\AncestorSetH{v}{\tilde{G}}| \leq \estv{v} \leq (1+\eps)|\AncestorSetH{v}{\tilde{G}}|$ \emph{and} $(1-\eps)|\sedgeh{v}{\tilde{G}}| \leq \este{v} \leq (1+\eps)|\sedgeh{v}{\tilde{G}}|$ with probability at least $1 - \frac{1}{n^d}$.
\end{lemma}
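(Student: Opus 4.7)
The plan is to process vertices of $\tilde{G}$ in topological order while maintaining, for each vertex $v$, two count-distinct sketches: one whose underlying multiset represents $\AncestorSetH{v}{\tilde{G}}$ and another whose underlying multiset represents $\sedgeh{v}{\tilde{G}}$. Using the sketch provided by \cref{lem:estimator-bound} with failure parameter $1/n^{d+2}$ (which is still of the form $1/n^{d'}$ and hence the sketch size remains $O(\eps^{-2}\log^2 n)$), we can later union-bound over all $2|\tilde{V}| \le 2n$ sketches to get the desired $1 - 1/n^d$ success probability.

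First, I would compute a topological ordering of $\tilde{G}$ in $O(|\tilde{V}|+|\tilde{E}|)$ time. Then I would iterate through the vertices in this order. For each vertex $v$ with predecessor list $p_1,\ldots,p_\ell$, I would build the ancestor sketch $\hat{a}(v)$ as follows: initialize a fresh sketch seeded with the single element $v$, and then for each $i \in \{1,\ldots,\ell\}$ invoke the merge primitive to merge in the already-computed sketch for $p_i$. The \Mergeable property of \cref{def:anc-estimator} guarantees that the result is an $(\eps,1/n^{d+2},D)$-$\est$ for $\{v\}\cup\bigcup_i \AncestorSetH{p_i}{\tilde{G}} = \AncestorSetH{v}{\tilde{G}}$. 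For the edge sketch $\hat{e}(v)$, I would do the analogous construction: start from an empty sketch, insert each of the $\ell$ edges $(p_i,v)$ into it, and then merge in each predecessor's edge sketch, yielding an estimator for $\bigcup_i \sedgeh{p_i}{\tilde{G}} \cup \{(p_i,v)\}_{i=1}^\ell = \sedgeh{v}{\tilde{G}}$. At the end of the sweep, read off $\hat{a}(v)$ and $\hat{e}(v)$ from the final sketches.

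For the runtime, each individual merge and each single-element insertion costs $O(D) = O(\eps^{-2}\log^2 n) = O(\log^2 n)$ since $\eps$ is constant. At vertex $v$, the total work for constructing both sketches is $O((\ell+1)\log^2 n)$, where $\ell = |\pred(v)|$. Summing over all vertices, $\sum_v (|\pred(v)|+1) = |\tilde{V}| + |\tilde{E}|$, which gives total work $O((|\tilde{V}|+|\tilde{E}|)\log^2 n)$, matching the lemma. Correctness of each individual estimate follows directly from \cref{lem:estimator-bound} and the \Mergeable property; a union bound over at most $2|\tilde{V}| \le 2n$ estimator outputs, each failing with probability at most $1/n^{d+2}$, shows that all $2|\tilde{V}|$ estimates are simultaneously within a $(1\pm\eps)$ factor with probability at least $1-1/n^d$.

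The main subtle point, rather than an obstacle, is verifying that the \Mergeable property composes correctly along the topological order: each sketch $\hat{a}(v)$ is constructed by an iterated merge of sketches that are themselves merges, and we need the error and space guarantees to be preserved through arbitrarily deep composition. This is immediate from the definition of the \Mergeable property in \cref{def:anc-estimator}, which states that merging produces another $(\eps,\delta,D)$-estimator of the same parameters for the union multiset, so an inductive argument on topological depth yields the claim for every $v$.
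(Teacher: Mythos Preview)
Your proposal is correct and mirrors the paper's own argument almost exactly: process vertices in topological order, build each vertex's ancestor (resp.\ ancestor-edge) sketch by inserting the vertex (resp.\ incoming edges) and merging in predecessor sketches, then charge each merge to an edge to get the $O((|\tilde V|+|\tilde E|)\log^2 n)$ bound. The only difference is that you spell out the union bound over the $2|\tilde V|$ estimator outputs and the inductive preservation of the \Mergeable guarantee, whereas the paper's proof simply cites the estimator lemma for correctness and focuses on the running-time accounting.
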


\begin{proof}
\cref{lem:desired-estimator} provides us with our
desired approximation. Now, all that remains to show
is that~\cref{alg:ancestor-estimation}
\ifsubmission
and Algorithm 8  in our full paper~\cite{fullversion}
\else
and~\cref{alg:ancestor-edge-estimation}
\fi
runs within
our desired time bounds. \cref{alg:ancestor-estimation}
visits each
vertex exactly once. For each vertex, it merges
the estimators of each of its immediate predecessors.
By~\cref{lem:merge-time}, each merge takes $O\left(\frac{1}{\eps^2}\log^2 n\right)$ time. Because
we visit each vertex exactly once, we also visit each
predecessor edge exactly once. This means that in
total we perform $O\left(\frac{m}{\eps^2}\log^2 n\right)$
merges. Since $\eps$ is constant, this algorithm
requires $O(m\log^2 n)$ time. The same proof follows
\ifsubmission
for Algorithm 8
in our full paper~\cite{fullversion}.
\else
for~\cref{alg:ancestor-edge-estimation}.
\fi
\end{proof}

Throughout the remaining parts of
the paper, we assume that $\eps =
1/3$ in our estimation
procedures and do not
explicitly give our results in
terms of $\eps$.

\section{Scheduling Small Subgraphs in Near-Linear Time}\label{sec:detailed-alg}

Here, we consider subgraphs $H = (V, E)$ such that every vertex in the graph has a bounded number of ancestors and obtain a schedule for such \emph{small subgraphs} in near-linear time.

\begin{definition}\label{def:small-subgraph}
A \emph{small subgraph} is a graph $H = (V_H, E_H)$ where
each vertex $v \in V_H$ has at most
$2\rho$ ancestors.
\end{definition}

Our main algorithm schedules a small graph in batches using~\cref{alg:core_algo}.
After scheduling a batch of vertices, we insert a communication delay of
$\rho$ time units so that results of the computation from the previous batch
are shared with all machines (similar to Lepere-Rapine).
Then, we remove all vertices that we
scheduled and compute the next batch from the smaller graph.
We present this algorithm in~\cref{alg:small-subgraph}.

\begin{algorithm}[hbpt]
\caption{ScheduleSmallSubgraph$(H, \gamma)$}\label{alg:small-subgraph}
\LinesNumbered
\SetAlgoLined
\KwResult{A schedule of small subgraph
 $H$ on $M$ processors.}
 \KwIn{$H = (V_H, E_H)$ where $|\svh{v}{H}| \leq 2\rho$
 for all $v \in V_H$ and parameter $0 < \gamma < 1/2$.}
 \While{$H \neq \emptyset$}{\label{line:small-subgraph-while}
    $\batch \leftarrow$ FindBatch($H, \gamma$). [\cref{alg:core_algo}] \label{line:num-batches}\\
    List schedule $\as{v}$ using the $M$ processors
    for all $v\in \batch$. (\cref{app:list-scheduling})\\
    Insert communication delay of $\rho$ time units into the schedule.\label{line:comm-delay}\\
    Remove each $v \in \sv{\batch}$ and all edges adjacent to $v$ from $H$.\\
 }
\end{algorithm}

Our algorithm for scheduling small subgraphs relies on two key building blocks -- estimating the sizes of the ancestor sets (and ancestor edges) of each vertex (\cref{sec:estimators}), and using these estimates to find a \emph{batch} of vertices that can be scheduled without any communication (possibly by duplicating some vertices). We show how to find a batch in \cref{sec:batching}.

\subsection{Batching Algorithm}
\label{sec:batching}

Recall that the plan is for our algorithm to schedule a small subgraph by successively scheduling maximal subsets of vertices in the
graph whose ancestors do not 
\emph{overlap too much}; we call such a set of vertices a \defn{batch}. After scheduling each batch, we remove all the scheduled vertices from the graph and iterate on the remaining subgraph. 

A detailed description of this procedure is given in \cref{alg:core_algo}. 
For each vertex $v \in V_H$, let $\estv{v}$ and $\este{v}$ denote the estimated sizes of $\AncestorSetH{v}{H}$ and $\sedgeh{v}{H}$ respectively (henceforth referred to as
$\AncestorSet{v}$ and $\sedge{v}$). Then the $i$-th bucket, $\bucket_i$, is defined as $\bucket_i = \{v \in V_H \mid 2^i \leq \este{v} < 2^{i+1} \}$. Since every node $v \in V_H$ has at most $O(\rho)$ ancestors, there are only $k = O(\log \rho)$ such buckets. Recall that from \cref{lem:ancestor-est-correct}, this estimation can be 
performed in near-linear time. The algorithm maintains a batch $\batch$ of vertices that is initially empty. For each non-empty bucket $\bucket_i$ (processed in decreasing order of size), we repeatedly sample nodes uniformly at random from the bucket (without replacement). 

For each sampled node $v \in \bucket_i$, we explicitly enumerate the ancestor sets $\AncestorSet{v}$ and $\es{v}$ and also compute $\AncestorSet{v} \setminus \AncestorSet{B}$ and $\es{v} \setminus \es{B}$. Since we can maintain the ancestor sets of the current batch $\batch$ in a hash table, this enumeration takes $O(|\es{v}|)$ time. 
A sampled node $v$ is said to be \emph{\good} if $|\AncestorSet{v} \setminus \AncestorSet{B}| > \gamma |\AncestorSet{v}|$ \emph{and} $|\es{v} \setminus \es{B}| > \gamma |\es{v}|$; and said to be \emph{\bad} otherwise. The algorithm adds all \good nodes to the batch $\batch$ and continues sampling from the bucket until it samples $\Theta(\log n)$ consecutive \bad nodes. 
Once all the buckets have been processed, we \emph{prune} the buckets to remove all stale nodes and then repeat the sampling procedure until all buckets are empty. 

A bucket $\bucket_i$ is reduced when 1) a vertex, $v$, in it
is added to $\batch$, 2) a sampled vertex $v$ is stale and 3) during the pruning process. No vertex remains unscheduled because either a vertex 
$v$ is scheduled in the current batch or it is stale. For all stale vertices, in~\cref{alg:small-subgraph}, 
we remove all the scheduled ancestors of these stale vertices (so the 
vertices become fresh again). We repeat the procedure given 
in~\cref{alg:core_algo} (\cref{line:num-batches} 
of~\cref{alg:small-subgraph}) until the entire graph is scheduled (\cref{line:small-subgraph-while} of~\cref{alg:small-subgraph}) 
so that all vertices are eventually scheduled.
The pruning procedure is presented in \cref{alg:pruning}. 
In this step, we again estimate the sizes of ancestor sets of all vertices in the graph $H \setminus \as{B}$ to determine whether a vertex is stale.

\begin{algorithm}[hbtp]
\caption{FindBatch($H, \gamma$) 
}\label{alg:core_algo}
\KwResult{Returns batch $\batch$, the batch of vertices to schedule.}
\KwIn{A subgraph $H = (V_H, E_H)$ such that $|\AncestorSetH{v}{H}| \leq 2\rho$ for all $v \in V_H$; $0 < \gamma < 1/2$.} 
Let $N = \Theta(\log n)$.\\
Initially, $\batch \leftarrow \emptyset$ and all nodes are unmarked.\\
Obtain estimates $\estv{v}$ and $\este{v}$ for all $v \in V_H$.\\
Let bucket $\bucket_i = \{ v \in V_H: 2^i \leq \este{v} < 2^{i+1}\}$.\\
\While{at least one bucket is non-empty}{ \label{line:while}
\For{$i = k \text{ to } 1$}{\label{line:forloop}
    Let $s = 0$.\\
    \While{$s < N$ and $|\bucket_i|>0$}{
        Let $v$ be a uniformly sampled node in bucket $\bucket_i$.\\
        Find $\as{v}$ and $\as{v}\setminus \as{\batch}$ as well as $\es{v}$ and $\es{v}\setminus\es{\batch}$.\label{line:find-anc} \label{line:ancestor}\\
        \eIf{ $|\as{v}\setminus \as{\batch}| > \gamma|\as{v}|$ and $|\es{v}\setminus \es{\batch}| > \gamma|\es{v}|$ }
        {\label{adding-to-bucket}
            Mark $v$ as \good{}, add $v$ and its ancestors
            to $\batch$.\\
            Set $s=0$.\\
        }{
            Mark $v$ as \bad{}. $s = s+1$.\\ 
        }
        Remove $v$ from $K_i$.\\
    }
}
$\bucket_1, \ldots, \bucket_k \leftarrow$ Prune($H$, $B$, $\bucket_1$, \ldots, $\bucket_k$) [\cref{alg:pruning}].\label{line:pruning}
}
Return $\batch$.\\
\end{algorithm}

\begin{algorithm}[hbtp]
\caption{Prune$(H, B, \bucket_1, \ldots, \bucket_k)$}
\label{alg:pruning}
\LinesNumbered
\KwResult{New buckets $\bucket_1, \ldots, \bucket_k$.}
\KwIn{ A graph $H = (V, E)$, a batch $\batch \subseteq V$, and buckets $\bucket_1, \ldots, \bucket_k$.}
Obtain estimates $\estvh{v}{H}$ and $\esteh{v}{H}$ for all nodes $v \in \cup_{i=1}^k \bucket_i$ in the graph $H$.\\
Let $H' \leftarrow H \setminus \as{B}$\\
Obtain estimates $\estvh{v}{H'}$ and $\esteh{v}{H'}$ for all nodes $v \in \cup_{i=1}^k \bucket_i$ in the graph $H'$.\\
\For{$i = k$ to 1} {
    \For{each node $v$ in bucket $\bucket_i$}{
    $X \leftarrow \dfrac{\estvh{v}{H'}}
    {\estvh{v}{H}}$.\\
    $Y \leftarrow \dfrac{\esteh{v}{H'}}
    {\esteh{v}{H}}$.\\
    \If{$X \leq 2\gamma$ or 
    $Y \leq 2\gamma$}{\label{prune-condition}
        Remove $v$ from $\bucket_i$.\\
    }
}
}
Return the new buckets $\bucket_1, \ldots, \bucket_k$.
\end{algorithm}

\subsection{Analysis}

We first provide two key properties of the batch $\batch$ of vertices found by \cref{alg:core_algo} that are crucial for our final approximation factor and then analyze the running time of the algorithm. 
\ifsubmission
Due to space constraints, some proofs are relegated to~\cref{app:proofs}.
\fi

\ifsubmission
\myparagraph{Quality of the Schedule}\label{sec:quality}
\else
\subsubsection{Quality of the Schedule}\label{sec:quality}
\fi
We show that $\batch$ comprises of vertices whose ancestor sets do not overlap significantly, and further that it is the ``maximal'' such set.

\begin{lemma}\label{lem:good-schedule}
The batch $\batch$ returned by \cref{alg:core_algo} satisfies
$|\sv{\batch}| > \gamma \sum_{v \in \batch} 
|\sv{v}|$ and $|\sedge{\batch}| > \gamma \sum_{v \in \batch} |\sedge{v}|$.
\end{lemma}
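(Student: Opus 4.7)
The plan is to prove the lemma by induction on the order in which vertices are added to $\batch$, exploiting the acceptance condition checked at the line labeled \texttt{adding-to-bucket}. Order the elements of the final batch as $v_1, v_2, \ldots, v_t$ in the order they were added by \cref{alg:core_algo}, and set $\batch_0 = \emptyset$ and $\batch_j = \{v_1, \ldots, v_j\}$. The key observation is that at the moment $v_j$ was admitted to the batch, the current batch was exactly $\batch_{j-1}$, and the algorithm verified
\[
|\as{v_j} \setminus \as{\batch_{j-1}}| > \gamma\, |\as{v_j}|
\qquad\text{and}\qquad
|\es{v_j} \setminus \es{\batch_{j-1}}| > \gamma\, |\es{v_j}|.
\]

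Next, I would observe the telescoping identity $|\as{\batch_j}| - |\as{\batch_{j-1}}| = |\as{v_j} \setminus \as{\batch_{j-1}}|$. This is immediate once we show $\as{\batch_j} = \as{\batch_{j-1}} \cup \as{v_j}$, which follows directly from the definition $\as{S} = \bigcup_{v \in S} \as{v}$ since $\batch_j = \batch_{j-1} \cup \{v_j\}$. Summing from $j=1$ to $t$ gives
\[
|\as{\batch}| \;=\; \sum_{j=1}^{t} \bigl(|\as{\batch_j}| - |\as{\batch_{j-1}}|\bigr) \;=\; \sum_{j=1}^{t} |\as{v_j} \setminus \as{\batch_{j-1}}| \;>\; \gamma \sum_{j=1}^{t} |\as{v_j}|,
\]
which is the desired vertex bound.

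For the edge bound, the same telescoping works, but we first need a small lemma that $\es{\batch_j} = \es{\batch_{j-1}} \cup \es{v_j}$. The nontrivial direction is ruling out an edge of $\es{\batch_j}$ that lies in neither $\es{\batch_{j-1}}$ nor $\es{v_j}$: such an edge would have one endpoint in $\as{\batch_{j-1}}\setminus\as{v_j}$ and the other in $\as{v_j}\setminus\as{\batch_{j-1}}$. I expect this to be the main obstacle, but it is resolved using the ancestor-closure property. If $(u,w)$ is an edge of $H$ with $w \in \as{v_j}$, then $u$ is a predecessor of $w$ and hence also an ancestor of $v_j$, so $u \in \as{v_j}$; symmetrically, if $w \in \as{\batch_{j-1}}$, then $u \in \as{\batch_{j-1}}$. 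Thus the hypothetical edge cannot exist, and the decomposition holds. Telescoping as before yields
\[
|\es{\batch}| \;=\; \sum_{j=1}^{t} |\es{v_j} \setminus \es{\batch_{j-1}}| \;>\; \gamma \sum_{j=1}^{t} |\es{v_j}|,
\]
completing the proof.
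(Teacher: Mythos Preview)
Your proof is correct and follows essentially the same approach as the paper's own proof: both proceed by induction on the order in which vertices are added to $\batch$, using the acceptance condition at line~\texttt{adding-to-bucket} to lower-bound the incremental growth of $|\as{\batch}|$ and $|\es{\batch}|$. Your telescoping formulation is just an unrolled version of the paper's inductive step. If anything, your argument is more careful than the paper's: the paper simply asserts that ``the same proof also holds for $\es{\batch}$,'' whereas you explicitly verify the decomposition $\es{\batch_j} = \es{\batch_{j-1}} \cup \es{v_j}$ via the ancestor-closure property (the head of any edge determines which side the edge lies in), which is the one place where the edge case differs from the vertex case.
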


\begin{proof}
Let $\batch^{(\ell)} \subseteq B$ denote the set containing the first $\ell$ vertices added to $B$ by the algorithm. We prove the lemma via induction. In the base case, $\batch^{(1)}$ consists of a single vertex and trivially satisfies the claim. Now suppose that the claim is true for some $\ell \geq 1$ and let $v$ be the $(\ell+1)$-th vertex to be added to $B$. 
By~\cref{adding-to-bucket} of~\cref{alg:core_algo},
we add a vertex $v$ into $\batch^{(\ell)}$ if and only
if $|\sv{v} \setminus \sv{\batch^{(\ell)}}| > \gamma 
|\sv{v}|$ and $|\sedge{v} \setminus \sedge{\batch^{(\ell)}}|
> \gamma |\sedge{v}|$. Furthermore,
since we enumerate $\sv{v}$
via DFS, our calculation of the cardinality of each of
these sets is exact. We now have, $|\sv{\batch^{(\ell+1)}}| = |\sv{\batch^{(\ell)}}| + |\sv{v} \setminus \sv{\batch}| > |\sv{\batch^{(\ell)}}| + \gamma 
|\sv{v}|$. By the induction hypothesis, we now have $|\sv{\batch^{(\ell+1)}}| > \gamma \sum_{w \in \batch^{(\ell)}} \sv{w} + \gamma \sv{v} = \gamma \sum_{w \in \batch^{(\ell+1)}} \sv{w}$. The same proof also holds for $\sedge{\batch}$ and the lemma follows.
\end{proof}

\begin{lemma}\label{lem:not-added-to-batch}
If a vertex $w$ was not added to $B$, it is pruned
by~\cref{alg:pruning}, with high probability.
If a vertex $v$ is pruned by~\cref{alg:pruning}, then
$|\sv{v} \setminus \sv{\batch}| \leq 4\gamma|\AncestorSet{v}|$
or $|\sedge{v} \setminus \sedge{\batch}| \leq 
4\gamma |\sedge{v}|$, 
with high probability.
\end{lemma}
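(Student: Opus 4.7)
This lemma has two directions. The second (``soundness'': pruning implies $4\gamma$-overlap) reduces cleanly to the count-distinct estimator guarantee of \cref{lem:ancestor-est-correct}. The first (``completeness'': every vertex not added to $B$ is pruned) is the technical heart; I would split it according to whether $w$ was ever sampled by \cref{alg:core_algo}.

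\textbf{Soundness direction.} Suppose $v$ is pruned by \cref{alg:pruning}, meaning $X = \estvh{v}{H'}/\estvh{v}{H} \leq 2\gamma$ or $Y = \esteh{v}{H'}/\esteh{v}{H} \leq 2\gamma$, where $H' = H\setminus \as{B}$. By construction, $|\svh{v}{H'}| = |\as{v}\setminus \as{B}|$ and $|\sedgeh{v}{H'}| = |\sedge{v}\setminus \sedge{B}|$. Applying \cref{lem:ancestor-est-correct} with $\eps = 1/3$, and union-bounding over all relevant vertices in both estimator invocations, each estimate lies within a factor $(1\pm\eps)$ of its true value with high probability. Substituting, $X \leq 2\gamma$ rearranges to $|\as{v}\setminus \as{B}| \leq \frac{2\gamma(1+\eps)}{1-\eps}|\as{v}| = 4\gamma|\as{v}|$, and symmetrically for $Y$.

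\textbf{Completeness direction.} Let $w$ be a vertex not added to $B$. Case (i): $w$ was sampled by \cref{alg:core_algo} and marked stale. At the sampling moment, $|\as{w}\setminus\as{B_0}| \leq \gamma|\as{w}|$ or the analogous edge condition held for the then-current batch $B_0 \subseteq B$; since $B$ only grows, the inequality persists with respect to the final $B$, and estimator accuracy triggers the pruning condition on $w$. Case (ii): $w$ was never sampled, so it remained in its bucket $K_i$ at pruning time. The argument now uses the termination condition of \cref{alg:core_algo}: processing $K_i$ halted after $N = \Theta(\log n)$ consecutive stale draws, during which $B$ (and hence the fresh/stale status of every remaining vertex) was frozen. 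A hypergeometric tail bound then establishes that, conditioned on observing $N$ consecutive stale draws, the event that more than a $\gamma$-fraction of $K_i$ was fresh at that point has probability at most $(1-\gamma)^N \leq n^{-c}$ for a sufficiently large constant hidden in $N$. Thus w.h.p.\ $w$ itself is stale relative to the current $B$, and estimator accuracy again triggers the pruning condition on $w$.

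\textbf{Main obstacle.} The delicate step is case (ii): arguing that the $N$-consecutive-stale stopping rule certifies each individual remaining vertex as stale, despite the adaptive evolution of $B$ throughout sampling. The key simplification is to restrict attention to the final $N$ stale samples, where $B$ is constant; this yields a tractable hypergeometric concentration bound per bucket. Lifting the per-bucket statement to per-vertex w.h.p.\ guarantees then requires a careful tuning of the hidden constant in $N$ together with a union bound over all vertices, all buckets, and all outer iterations of \cref{alg:core_algo}.
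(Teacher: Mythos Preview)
Your soundness direction is correct and matches the paper's argument exactly.

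For the completeness direction you are over-reading the statement and, as a result, attempting to prove something both stronger than needed and not actually true. In the paper, ``$w$ was not added to $B$'' is simply the algorithmic event that $w$ failed the freshness test in \cref{alg:core_algo}, i.e., $|\as{w}\setminus\as{B}|\le\gamma|\as{w}|$ or $|\es{w}\setminus\es{B}|\le\gamma|\es{w}|$. The entire first half of the lemma is then a one-line estimator calculation: if the true ratio is at most $\gamma$, then by \cref{lem:ancestor-est-correct} with $\eps=1/3$ the estimated ratio satisfies $X\le\frac{1+\eps}{1-\eps}\gamma=2\gamma$ (respectively $Y\le2\gamma$), so \cref{prune-condition} of \cref{alg:pruning} fires. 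That is the whole proof---essentially your Case~(i), with no further case split.

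Your Case~(ii) is unnecessary for this lemma and contains a genuine gap. The hypergeometric tail bound you sketch correctly controls the \emph{fraction} of fresh vertices remaining in a bucket after $N$ consecutive stale draws, but that cannot be lifted to a per-vertex guarantee by any union bound: even when at most a $\gamma$-fraction of the bucket is fresh, a specific unsampled $w$ may be one of those fresh vertices. Concretely, if $w$ has both $|\as{w}\setminus\as{B}|>4\gamma|\as{w}|$ and $|\es{w}\setminus\es{B}|>4\gamma|\es{w}|$, then with high probability $X,Y>2\gamma$ and \cref{alg:pruning} does \emph{not} remove $w$; it survives to the next iteration of the outer while loop, where it may be sampled and added to $B$. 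So under your reading the first claim would be false for such $w$. The paper places the fraction-of-fresh argument in a separate lemma, where it is used to bound the number of while-loop iterations rather than to establish anything about individual vertices.
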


\begin{proof}
We first prove 
that any vertex $v$ that is not added to $B$ must be 
removed from its bucket by \cref{alg:pruning}. Any vertex not added
to $B$ must have $|\AncestorSet{v} \setminus \AncestorSet{B}| \leq 
\gamma |\AncestorSet{v}|$. 
By~\cref{lem:ancestor-est-correct}, 
$\estvh{v}{H'} \leq 4/3 |\AncestorSet{v} \setminus \AncestorSet{B}|$ and $\estvh{v}{H} \geq 2/3|\AncestorSet{v}|$, with
high probability.
This must mean that $\frac{\estvh{v}{H'}}{\estvh{v}{H}} \leq \frac{4/3 |\AncestorSet{v} \setminus \AncestorSet{B}|}{2/3|\AncestorSet{v}|} \leq
\frac{4/3\gamma |\AncestorSet{v}|}{2/3|\AncestorSet{v}|} \leq 2\gamma$.
Thus, $v$ will be pruned. The same proof holds for $\esteh{v}{H'}$.

We now prove that the pruning procedure successfully prunes vertices with
not too many unique ancestors.
 In~\cref{alg:pruning}, by \cref{lem:ancestor-est-correct} (setting $\epsilon=1/3$), we have with high probability, $\estvh{v}{H'} \geq 2/3|\svh{v}{H'}|$.
Similarly, with high probability, $\estvh{v}{H} \leq 
4/3|\svh{v}{H}|$. 
This means
$X = \frac{\estvh{v}{H'}}{\estvh{v}{H}} \geq
\frac{2/3|\svh{v}{H'}|}{4/3|\svh{v}{H}|} = \frac{1}{2}
\left(\frac{|\svh{v}{H'}|}{|\svh{v}{H}|}\right)$.  
By the same argument, we also have $Y = \frac{\esteh{v}{H'}}{\esteh{v}{H}} \geq \frac{1}{2}
\left(\frac{|\sedgeh{v}{H'}|}{|\sedgeh{v}{H}|}\right)$ with high probability.

By~\cref{prune-condition} 
of~\cref{alg:pruning}, when we remove a vertex $v$ we have either $X \leq 2 \gamma$ or $Y \leq 2 \gamma$. 
By the above, $X, Y \geq \frac{1}{2}(4\gamma) = 2\gamma$. Thus, the
largest that $\left(\frac{|\svh{v}{H'}|}{|\svh{v}{H}|}\right)$ or
$\left(\frac{|\EdgeSetH{v}{H'}|}{|\EdgeSetH{v}{H}|}\right)$ can be
while still being pruned is $4\gamma$.
Thus, with high probability, we have either $\frac{|\svh{v}{H'}|}{|\svh{v}{H}|} \leq 4 \gamma$ or $\frac{|\sedgeh{v}{H'}|}{|\sedgeh{v}{H}|} \leq 4 \gamma$. Since $\svh{v}{H'} = \sv{v} \setminus \sv{\batch}$, the claim follows.
\end{proof}

The above two lemmas tell us that there are enough
unique elements in each batch $B$, any vertex
not added to $B$ will be pruned w.h.p., and the pruning
procedure only prunes vertices with a large enough overlap with $B$ w.h.p.
This allows us to show the following lemma on the length of the schedule
produced by~\cref{alg:small-subgraph} for small subgraph $H$.
We
first show that we only call~\cref{alg:core_algo} at most $O\left(\log_{1/\gamma}(\rho)\right)$
times from~\cref{line:num-batches} of~\cref{alg:small-subgraph}.

\begin{lemma}\label{lem:round-till-empty}
The number of batches needed to be scheduled
before all vertices in $H$ are scheduled is
at most $4 \log_{1/4\gamma}(2\rho)$, 
with high probability.
\end{lemma}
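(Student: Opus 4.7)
The plan is to apply~\cref{lem:not-added-to-batch} inductively across the batches produced by~\cref{alg:small-subgraph} and track how the ancestor and ancestor-edge counts of each still-unscheduled vertex shrink from one batch to the next. Let $L = \log_{1/(4\gamma)}(2\rho)$, and for a vertex $v$ still present in $H$ at the start of batch $t$ let $|\sv{v}|_t$ and $|\sedge{v}|_t$ denote its ancestor and ancestor-edge counts in the current residual graph. By~\cref{def:small-subgraph}, $|\sv{v}|_1 \leq 2\rho$, and hence $|\sedge{v}|_1 \leq \binom{2\rho}{2} \leq 4\rho^2$. Whenever $v$ is not scheduled in batch $t$ (i.e.\ $v\notin\sv{B_t}$), $v$ was not added to $B_t$, so by~\cref{lem:not-added-to-batch} it must have been pruned, yielding
\[
|\sv{v}|_{t+1} \leq 4\gamma\,|\sv{v}|_t \quad\text{or}\quad |\sedge{v}|_{t+1} \leq 4\gamma\,|\sedge{v}|_t
\]
with high probability. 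A union bound over the at most $n$ vertices and the $4L=O(\log\rho)$ batches makes all shrinkage guarantees hold simultaneously with high probability.

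Let $T_a$ and $T_e$ be the number of batches among the first $T = 4L$ in which $v$ experiences an ancestor-shrinkage and edge-shrinkage respectively; since at least one occurs per batch, $T_a + T_e \geq T$. Unrolling gives $|\sv{v}|_{T+1} \leq (4\gamma)^{T_a}\cdot 2\rho$ and $|\sedge{v}|_{T+1} \leq (4\gamma)^{T_e}\cdot 4\rho^2$. The structural fact I will invoke is that $|\sedge{v}| \geq |\sv{v}| - 1$: for each non-$v$ ancestor $u\in\sv{v}$, the first edge $(u,w)$ of any path $u\leadsto v$ lies in $\sedge{v}$ because $w\in\sv{v}$, and distinct $u$'s give distinct edges; in particular $|\sedge{v}| = 0$ forces $|\sv{v}|=1$.

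Splitting on the decomposition $T_a + T_e \geq 4L$, either $T_a \geq L$, in which case $|\sv{v}|_{T+1} \leq (4\gamma)^L\cdot 2\rho = 1$; or $T_a < L$ so that $T_e > 3L$, giving $|\sedge{v}|_{T+1} < (4\gamma)^{3L}\cdot 4\rho^2 = 1/(2\rho) < 1$, whence $|\sedge{v}|_{T+1}=0$ and thus $|\sv{v}|_{T+1}=1$ by the structural fact. Either way, any $v$ still remaining after $T$ batches is a source in the residual graph and is scheduled in the following call to \cref{alg:core_algo}. A sharper counting actually gives termination by batch $3L+1$, which is comfortably absorbed by the stated bound $4L$.

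The main technical hurdle is the disjunctive OR structure of the shrinkage guarantee: any given batch might shrink only $|\sv{v}|$ or only $|\sedge{v}|$, so no single potential is monotonically decreasing at a uniform rate. The inequality $|\sedge{v}|\geq|\sv{v}|-1$ is exactly the bridge that closes the edge-only regime by forcing $|\sedge{v}|\to 0$ to collapse $|\sv{v}|$ to $1$, matching the ancestor-only regime up to the constant $4$ in the exponent of the logarithm.
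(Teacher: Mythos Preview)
Your argument is correct and follows the same strategy as the paper's proof: invoke \cref{lem:not-added-to-batch} to obtain a geometric shrinkage of either $|\sv{v}|$ or $|\sedge{v}|$ in each batch where $v$ survives, and then bound the total number of shrinkages before $v$ must become a source. The paper's own proof is a two-line sketch that simply asserts the bound $2\log_{1/4\gamma}(4\rho^2)=4\log_{1/4\gamma}(2\rho)$ without explicitly handling the disjunction; your separate accounting of $T_a$ and $T_e$ together with the structural inequality $|\sedge{v}|\ge |\sv{v}|-1$ makes that step precise and is a genuine improvement in rigor over what the paper writes.
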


\begin{proof}
By~\cref{lem:not-added-to-batch}, each vertex $v$ we do not schedule in a batch $\batch$
has at least $(1 - 4\gamma)|\sv{v}|$ vertices in $\sv{\batch}$ or at least
$(1-4\gamma)|\sedge{v}|$ edges in $\sedge{\batch}$. 
Since we assumed that all vertices in $H$ have $\leq 2\rho$ ancestors, this means that
$v$ can only remain unscheduled for at most $2 \log_{1/ 4\gamma}(4\rho^2)$ batches
until $\sv{v}$ and $\sedge{v}$ both become empty ($\sgraph{v}$ can have at most $4\rho^2$ 
edges).
\end{proof}

Using~\cref{lem:round-till-empty}, we can prove the length of the schedule 
for $H$ using~\cref{alg:small-subgraph}. The proof of this lemma is similar to the
proof of schedule length of small subgraphs in~\cite{LR02}.

\begin{lemma}\label{lem:small-subgraph-processing}
With high probability, the schedule obtained from~\cref{alg:small-subgraph} has size at most 
$\frac{|V_H|}{\gamma M} + 12\rho\log_{1/4\gamma}(2\rho)$ on $M$ processors. 
\end{lemma}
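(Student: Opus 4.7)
The plan is to combine three ingredients: the batch-count bound from Lemma~\ref{lem:round-till-empty}, the per-batch total-work bound from Lemma~\ref{lem:good-schedule}, and the classical Graham list-scheduling inequality applied per batch. All three are already in hand, so this is a direct aggregation argument.

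First I would analyze the cost of a single iteration of the outer loop of \cref{alg:small-subgraph}. In one iteration we list-schedule the chains $\as{v}$ for $v \in \batch$ and then pad by $\rho$ time units. Since $H$ is a small subgraph, every chain has length $|\as{v}| \le 2\rho$, and the total amount of work placed on the $M$ machines is $W := \sum_{v\in\batch}|\as{v}|$. The standard list-scheduling bound gives a makespan of at most $W/M + \max_{v \in \batch}|\as{v}| \le W/M + 2\rho$ for this batch. Adding the $\rho$ inserted delay from \cref{line:comm-delay}, each iteration contributes at most $W/M + 3\rho$ to the overall schedule.

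Next I would use \cref{lem:good-schedule} to replace $W$ by $|\sv{\batch}|/\gamma$, since that lemma gives exactly $\sum_{v\in\batch}|\sv{v}| < |\sv{\batch}|/\gamma$. So a single iteration costs at most $|\sv{\batch}|/(\gamma M) + 3\rho$. Crucially, after each iteration we delete every vertex in $\sv{\batch}$ from $H$, so across the sequence of batches $\batch^{(1)}, \batch^{(2)}, \ldots$ the sets $\sv{\batch^{(j)}}$ are pairwise disjoint subsets of $V_H$, and therefore $\sum_j |\sv{\batch^{(j)}}| \le |V_H|$.

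Finally, by \cref{lem:round-till-empty} the outer loop terminates after at most $4\log_{1/4\gamma}(2\rho)$ iterations with high probability. Summing the per-batch bound over all iterations yields
\[
\sum_{j} \left(\frac{|\sv{\batch^{(j)}}|}{\gamma M} + 3\rho\right) \;\le\; \frac{|V_H|}{\gamma M} + 3\rho \cdot 4\log_{1/4\gamma}(2\rho) \;=\; \frac{|V_H|}{\gamma M} + 12\rho \log_{1/4\gamma}(2\rho),
\]
which is the claimed bound. The only probabilistic step is the invocation of \cref{lem:round-till-empty} (and implicitly \cref{lem:not-added-to-batch} that drives it), and a union bound over the $O(\log \rho)$ iterations preserves the high-probability guarantee. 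There is no real obstacle beyond being careful that (i) the chains are placed by list scheduling so the additive term is the \emph{maximum} chain length (not the sum), and (ii) the disjointness of the $\sv{\batch^{(j)}}$ is what lets the work term telescope into $|V_H|/(\gamma M)$ rather than accumulating a factor of the number of batches.
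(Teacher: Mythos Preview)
Your proposal is correct and follows essentially the same route as the paper: per batch, apply Graham's list-scheduling bound $W/M + 2\rho$ with $W=\sum_{v\in\batch}|\sv{v}|$, add the $\rho$ delay, use \cref{lem:good-schedule} to bound $W\le |\sv{\batch}|/\gamma$, observe that the $\sv{\batch^{(j)}}$ partition $V_H$ (so the work terms sum to $|V_H|/(\gamma M)$), and multiply the $3\rho$ additive term by the batch count from \cref{lem:round-till-empty}. Your identification of the two delicate points---the additive term being the maximum chain length and the disjointness of the $\sv{\batch^{(j)}}$---matches exactly what the paper's proof relies on.
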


\begin{proof}
By definition of the input, each $\sv{v}$ for $v \in V_H$
has at most $2\rho$ elements. 
Recall that we schedule all elements in each batch $\batch$ 
by duplicating the common shared ancestors such that
we obtain a set of independent ancestor sets to schedule. 
Then, we use a standard list scheduling
algorithm to schedule these lists; see~\cref{app:list-scheduling} for a classic list scheduling
algorithm.
Each vertex in $H$ gets scheduled in exactly one batch since
we remove all scheduled vertices from the subgraph used to compute the 
next batch. Let $B_1, B_2, \ldots, B_k$ denote the batches scheduled by \cref{alg:small-subgraph}.
Let $H_i$ be the subgraph obtained from $H$ by removing batches $B_0, \dots, B_{i - 1}$ and adjacent edges. ($B_0$ is empty.)
By~\cref{lem:good-schedule}, with high probability, for each batch $\batch_i$, we have $\sum_{v \in \batch_i} |\AncestorSetH{v}{H_i}| \leq \frac{1}{\gamma} |\AncestorSetH{\batch_i}{H_i}|$. Let $Z_i = \frac{1}{\gamma} |\AncestorSetH{\batch_i}{H_i}|$.

Graham's list scheduling algorithm \cite{graham:schedule} for independent jobs is known to produce a schedule whose length is at most the total length of jobs divided by the number of machines, plus the length of the longest job. In our case, we treat each ancestor set as one big independent job, and thus for each batch $\batch_i$, this bound becomes $Z_i/M + 2\rho$.

Finally \cref{alg:small-subgraph} inserts an idle time of $\rho$ between two successive batches. The total length of the schedule is thus upper bounded by (where $k$ is the number of batches): 
\begin{align*}
    \sum_{i=1}^k \left(\frac{Z_i}{M} + 2 \rho + \rho\right) &\leq 3 \rho \cdot 4 \log_{1/4\gamma}(2 \rho) \sum_{i=1}^k \frac{Z_i}{M} \;\;\; \text{(by~\cref{lem:round-till-empty})}\\
    &\leq 3 \rho \cdot 4 \log_{1/4\gamma}(2 \rho) + \frac{1}{\gamma M} \cdot \sum_{i=1}^k |\AncestorSetH{B_i}{H_i}|\\
    &= \frac{|V_H|}{\gamma M} + 12 \rho \log_{1/4\gamma}(2 \rho) \qedhere
\end{align*}
\end{proof}

\ifsubmission
\myparagraph{Running Time}\label{sec:running-time}
\else
\subsubsection{Runtime of Scheduling Small Subgraphs}
\fi
In order to analyze the running time of \cref{alg:core_algo}, we need a 
couple of technical lemmas. The key observation is that although computing 
the ancestor sets $\as{v}$ and $\es{v}$ (in \cref{line:ancestor}) of a 
vertex $v$ takes $O(|\es{v}|)$ time in the worst case, we can bound the 
total amount of time spent computing these ancestor sets by the size of the 
ancestor sets scheduled in the batch. There are two main components to the 
analysis. First, we show that after every iteration of the \emph{pruning} 
step, the number of vertices in each bucket reduces by at least a constant 
fraction and hence the sampling procedure is repeated at most $O(\ln n)$ 
times per batch. 
Secondly, we use a charging argument to upper bound the amount of time spent 
enumerating the ancestor sets of sampled vertices.

\smallparagraph{Finding Stale Vertices}
We first argue that with high probability, there are at most $O(\ln n)$ iterations of the while loop in \cref{line:while} of~\cref{alg:core_algo}. Intuitively, in each iteration of the while loop, the number of vertices in any bucket $\bucket_i$ reduces by at least a constant fraction. 

\ifsubmission
\later{
\subsection{Runtime of Scheduling Small Subgraphs}
\fi
\begin{lemma}\label{lem:large-number-bad}
For any constant $d \geq 1$ and $\psi > 0$, there is a constant $c \geq 1$ such that, with probability at least $1 - \frac{1}{n^d}$, at most a $\psi$-fraction of remaining nodes in each bucket are fresh after sampling $c \ln n$ stale vertices consecutively.
\end{lemma}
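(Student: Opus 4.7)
The plan is to analyze a single invocation of the inner sampling loop on a bucket $K_i$ (one iteration of the for-loop at \cref{line:forloop} of \cref{alg:core_algo}) and show that, with high probability over its random samples, it cannot terminate via the ``$c\ln n$ consecutive stale'' rule while leaving more than a $\psi$-fraction of $K_i$'s remaining vertices fresh. A union bound over the at-most $O(\log^2 n)$ such invocations across the lifetime of the algorithm then yields the global high-probability guarantee.

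The crucial structural observation is a monotonicity property: during a single execution of the inner while loop, $\batch$ only accumulates vertices, so $\sv{\batch}$ and $\sedge{\batch}$ grow monotonically; in particular, once a vertex $v \in K_i$ is classified as stale it remains stale. Consequently, every vertex still fresh at the stopping time was also fresh during the entire final streak of $c\ln n$ stale samples, so that streak drew none of them. Condition now on the loop halting in a configuration with $r$ vertices remaining in $K_i$, $f$ of them fresh, and $f > \psi r$. Uniform sampling without replacement gives the conditional streak probability
\[
\prod_{k=1}^{c\ln n}\!\left(1 - \frac{f}{r+k}\right) \;\le\; \left(\frac{r+1}{r+c\ln n+1}\right)^{f}.
\]
When $r \ge c\ln n$ the ratio is $1 - \Omega(c\ln n / r)$, so with $f > \psi r$ the bound is $\exp(-\Omega(\psi c\ln n)) = n^{-\Omega(\psi c)}$. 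When $r < c\ln n$ the ratio is at most $1/2$, so the bound is $2^{-f}$, which is polynomially small in $n$ whenever $f = \Omega(\log n)$. Choosing $c$ a sufficiently large constant depending on $d$ and $\psi$ drives each of these bounds below $n^{-(d + O(1))}$, and a union bound over the $\mathrm{poly}(n)$ candidate stopping states $(r, f)$ and over all invocations of the inner loop absorbs the overhead, giving the claimed $1 - n^{-d}$ guarantee.

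The main technical wrinkle is the small-bucket subcase where $r < c\ln n$ and $f = O(\log n)$ simultaneously: here the raw $2^{-f}$ tail is too weak to yield $n^{-d}$ by itself. I would dispose of it by observing that any such outcome leaves only $O(\log n)$ fresh vertices behind, so the subsequent outer-loop iteration (together with its pruning step, which by \cref{lem:not-added-to-batch} removes every stale vertex with high probability) finishes processing $K_i$ within $O(\log n)$ additional samples. This contributes only lower-order work and does not harm the outer-loop iteration bound that this lemma ultimately feeds into.
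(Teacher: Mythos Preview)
Your approach is the same in spirit as the paper's---both bound the probability that $c\ln n$ consecutive samples can all be stale while a non-trivial fraction of the bucket is still fresh---but the execution differs, and in one place yours is more careful. The paper upper-bounds the per-sample fresh probability by the fresh fraction $f$ at the \emph{end} of the streak and then invokes a Chernoff lower-tail bound with $fc\ln n$ standing in for the mean; your direct hypergeometric computation $\prod_{k}(1-f/(r+k)) \le ((r+1)/(r+c\ln n+1))^{f}$ avoids the question of whether that Chernoff step is legitimate when $fc\ln n$ is only an upper bound on the true expectation. You also surface the small-remainder regime $r < c\ln n$ explicitly, whereas the paper just writes ``we assume the number of vertices in $K_i$ is more than $c\ln n$'' and moves on. Your resolution there---once fewer than $c\ln n$ vertices remain, the next outer-loop iteration necessarily exhausts the bucket, so the $O(\ln n)$ iteration bound of \cref{lem:while-loop-times} is unaffected---is correct and is essentially the only thing one can say, since the literal $\psi$-fraction conclusion can genuinely fail when $r$ and $f$ are both small constants. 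One cosmetic point: the union bound is more naturally indexed by the at-most-$n$ possible streak start positions within an invocation rather than by ``stopping states $(r,f)$,'' but either way the count is $\mathrm{poly}(n)$ and the conclusion stands.
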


\ifsubmission
}

\later{
\fi
\begin{proof}
The main approach behind the proof is that we show that for 
any bucket where a constant fraction $\psi$ of the vertices in
the bucket are \good, for any $c \ln n$ consecutively sampled
vertices, we expect to see $\psi c \ln n$ \good vertices. 
Furthermore, we show a concentration bound around this expected
number of vertices using the Chernoff bound. 
Thus, we can conclude that if we
see $c \ln n$ \bad vertices consecutively (with no good vertices), 
then with high probability,
at most a small constant fraction of the remaining nodes in the bucket
is \good.

\cref{alg:core_algo} samples 
the vertices in each bucket $\bucket_i$ 
consecutively, uniformly at random 
without replacement, until a new \good vertex
is found or at least $c \ln n$ \bad vertices
are sampled consecutively. Let $F$ be the set of \good and
\bad vertices
sampled (and removed) so far from bucket $K_i$ \emph{up to the
most recent time} a \good vertex was sampled from $K_i$ (i.e.\ $F$
includes all vertices sampled including and up to the most recent
\good vertex sampled from $K_i$).
Let $\psi$ be some fraction $0 < \psi < 1$.
Suppose at most a $\psi$-fraction of the vertices in bucket 
$\bucket_i$ are \good after removing the previously sampled
$F$ vertices. 
Such an $\psi$ exists for every bucket with
at least one \good vertex and one \bad vertex after doing such 
removals. (In the case when
all vertices in the bucket are \good, all vertices from that bucket
will be sampled and added to $B$. If all vertices in the bucket are
\bad, then $c\ln n$ \bad vertices will be sampled immediately.)

From here on out, we assume the bucket $K_i$ only contains the
remaining vertices after the previously sampled $F$ vertices were 
removed. We assume the number of remaining
vertices in $K_i$ is more than 
$c \ln n$. The probability that each of the next sampled vertices
is a \good vertex is at least $\psi$.
The expected number of \good vertices 
in the $c \ln n$ samples
from $\bucket_i$ is lower bounded by:

\begin{align*}
    \sum_{i = 1}^{c \ln n} \left(i \cdot {c \ln n \choose i}\psi^{i} (1 - \psi)^{c\ln n - i}\right) = 
    \psi c \ln n.
\end{align*}

Suppose for our analysis that we only remove \bad vertices when we
sample them (and not \good vertices).
The above is a lower bound, in this setting, 
on the expected number of sampled
\good vertices from
$K_i$ since $\psi$ is the fraction of \good vertices after removing $F$;
if we remove more \bad vertices, the fraction of \good vertices 
cannot decrease so 
$\psi$ upper bounds the fraction of \good vertices
in $K_i$ as we remove more stale vertices. This assumption 
is the same as our algorithm when all $c \ln n$
sampled vertices are \bad.

By the Chernoff bound, the probability that we 
sample less than $(1 - \eps)\psi c\ln n$
\good vertices 
is less than $\exp\left(-\frac{\eps^2\psi c\ln 
n}{2}\right)$.
When $c > \frac{1}{(1-\eps)\psi}$, 
$(1-\eps)\psi c\ln n \geq 1$ 
for any $0 < \eps < 1$ and $0 < \psi < 1$. 
Then, the probability that no
\good vertices are sampled is less than
$\exp\left(-\frac{\eps^2\psi c\ln
n}{2}\right) = n^{\frac{-\eps^2 \psi c}{2}}$. 
It is easy to consider the case for constant $\psi \in (0, 1)$. If 
$\psi = o(1)$, then there exists a constant $\phi$ for which at most
a $\phi$-fraction of the vertices in $K_i$ are \good. If $\psi = \omega(1)$,
then the probability becomes super-polynomially small.
We can sample $c \ln n$ vertices for large enough 
constant $c \geq \frac{6d}{\eps^2 \psi}$ 
such that
with probability at least $1 - \frac{1}{n^d}$ 
for any constant $d \geq 1$, there
exists less than $\psi$-fraction of vertices
in the bucket that are \good if the next $c\ln n$ sampled
vertices are \bad. The factor of $6$ in the bound
$c \geq \frac{6d}{\eps^2 \psi}$ is useful when we take the 
union bound over multiple trials (at most $O(n^2)$)
for all buckets used during
the course of this algorithm.
\end{proof}
\ifsubmission
}
\fi

\begin{lemma}\label{lem:while-loop-times}
We perform $O(\ln n)$ iterations
of sampling and pruning, with high probability, 
before all buckets are empty. In other words,
with high probability,
\cref{line:while} of~\cref{alg:core_algo}
runs for $O(\ln n)$ iterations.
\end{lemma}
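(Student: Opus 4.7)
The plan is to show that each iteration of the outer \texttt{while} loop on \cref{line:while} of~\cref{alg:core_algo} reduces the size of every non-empty bucket by at least a constant factor, with high probability. Since each bucket initially contains at most $n$ vertices, this geometric decrease implies that all buckets must be empty after $O(\ln n)$ iterations.

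First I would fix a constant $\psi \in (0, 1)$ (for concreteness, $\psi = 1/2$), and invoke \cref{lem:large-number-bad} with this $\psi$ together with a sufficiently large constant $d$ in the failure probability. Focusing on a single outer iteration and a single bucket $K_i$, the sampling phase terminates either because $K_i$ has become empty (in which case nothing more need be said about it), or because we have observed $c \ln n$ consecutive stale samples. In the latter case, \cref{lem:large-number-bad} guarantees that, conditioned on the state of $B$ at that point, at most a $\psi$-fraction of the vertices still remaining in $K_i$ are fresh with respect to the current batch $B$; equivalently, at least a $(1-\psi)$-fraction of them are stale.

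Next I would apply the argument underlying \cref{lem:not-added-to-batch} to those stale vertices: any vertex $v$ still in $K_i$ with $|\sv{v}\setminus\sv{B}| \leq \gamma |\sv{v}|$ must, with high probability, have $\estvh{v}{H'}/\estvh{v}{H} \leq 2\gamma$ and hence be removed by the pruning step on \cref{line:pruning}. Combining the two bounds, after pruning the number of vertices remaining in $K_i$ is at most a $\psi$-fraction of the number at the beginning of the iteration. Iterating this contraction, after $t = O(\ln n / \ln(1/\psi)) = O(\ln n)$ outer iterations the expected size of each bucket drops below one, so every bucket is empty.

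The main technical care will be in the union bound: the per-iteration guarantees from \cref{lem:large-number-bad} and \cref{lem:not-added-to-batch} each hold with probability at least $1 - n^{-d}$, and we must union-bound these events over all $k = O(\log \rho)$ buckets and all $O(\ln n)$ iterations. Since $\log \rho \cdot \ln n = \poly\log n$, taking the constant $d$ large enough (absorbed into the hidden constant in \cref{lem:large-number-bad}) yields an overall failure probability of $n^{-c}$ for any desired constant $c$. A small edge case to handle separately is a bucket of size at most $c \ln n$, for which \cref{lem:large-number-bad} does not directly apply; but such a bucket is already drained by at most $c \ln n$ direct samples in a single iteration, so it does not affect the $O(\ln n)$ bound.
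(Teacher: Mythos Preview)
Your proposal is correct and follows essentially the same approach as the paper: invoke \cref{lem:large-number-bad} to show that at most a $\psi$-fraction of each bucket remains fresh after sampling stops, use the first half of \cref{lem:not-added-to-batch} to show all stale vertices are pruned, and conclude a geometric shrinkage yielding $O(\ln n)$ iterations via a union bound. One small refinement: when you invoke \cref{lem:not-added-to-batch}, remember that a stale vertex may fail the \emph{edge} condition rather than the vertex condition, so you need the analogous $Y \leq 2\gamma$ case as well (the paper's proof handles both); your handling of the small-bucket edge case is a nice detail the paper glosses over.
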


\begin{proof}
We prove the lemma for one bucket $\bucket_i$ and by
the union bound, the lemma holds for all buckets.
First, any sampled vertex which is \good is added to
$\batch$. Furthermore, we showed in~\cref{lem:not-added-to-batch} 
that any vertex which is
\bad is removed from $\bucket_i$ by~\cref{alg:pruning}.
Since the estimates $\ev{v}$ and $\ee{v}$ are within
a $\frac{1}{3}$-factor of $|\sv{v}|$ and $|\sedge{v}|$, 
respectively, we can upper bound $\frac{\estvh{v}{H'}}{
\estvh{v}{H}} \leq 2 \cdot \frac{|\sv{v} \setminus \sv{B}|}
{|\sv{v}|}$ (same holds for $\ee{v}$). If a vertex $v$ is
\bad, then with high probability, we have either $\frac{\estvh{v}{H'}}{
\estvh{v}{H}} \leq \frac{2|\sv{v} \setminus \sv{B}|}
{|\sv{v}|} \leq 2\gamma$ or $\frac{\esteh{v}{H'}}{
\esteh{v}{H}} \leq 2\gamma$, and it
is removed by~\cref{prune-condition} of~\cref{alg:pruning}.
Since any \good vertices that are sampled gets added
into $\batch$ and all \bad vertices are pruned at the end
of each iteration, it only remains to show a large enough
number of \bad vertices are pruned.

\cref{lem:large-number-bad} guarantees that, with high probability,  at least
$(1 - \psi)$-fraction of the vertices in $\bucket_i$
are \bad for any constant $\psi \in (0, 1)$. Then,~\cref{alg:pruning} removes
at least $(1 - \psi)|\bucket_i|$ vertices in
$\bucket_i$ in each iteration. The number of iterations
needed is then $\log_{1/(1 - \psi)}(|\bucket_i|) = O(\ln n)$.

Since there exists $O(\log \rho)$
buckets and $O(m)$ estimates, we
can take the union bound 
on the probability of 
success 
over all buckets and 
estimates. We obtain,
with high
probability, $O(\log n)$ iterations are necessary before
all buckets are empty. 
\end{proof}

\smallparagraph{Charging the Cost of Examining
\Bad Sets} 
Here we describe our charging argument that allows us to explictly enumerate 
the ancestor set of each sampled vertex. Computing the ancestor set of a 
vertex $v$ takes time $O(|\EdgeSet{v}|)$ using DFS. Since a fresh vertex 
gets added to the batch, the cost of computing the ancestor set of a fresh 
vertex can be easily bounded by the set of edges in $\EdgeSet{B}$, achieving a total cost, specifically, of $O\left(\frac{1}{\gamma}|\EdgeSet{B}|\right)$. Our 
charging argument allows us to bound the cost of computing ancestor sets of 
sampled stale vertices by charging it to the most recently sampled fresh 
vertex. Using the above, we provide the runtime of~\cref{alg:core_algo} below and then the runtime of~\cref{alg:small-subgraph}.

\iffull
We describe our charging 
argument
that allows us to look at $O(\log n)$ 
 additional vertices within each bucket until
 we find $O(\log n)$ consecutive vertices
 that are \bad. There are two parts to calculating the 
 runtime of enumerating the ancestor set of each 
 vertex that we sample via each iteration of the
 loop given in~\cref{line:forloop} of~\cref{alg:core_algo}.
 First, we calculate the runtime of enumerating the
 sets of all \good vertices. Then, we perform a
 charging argument that charges the runtime of enumerating
 the ancestor sets of \bad vertices to the cost of 
 enumerating the ancestor set of the \emph{most recently
 added vertex to $\batch$}. Together this allows
 us to show the following.

\else
\later{
\fi
\begin{lemma}\label{lem:charging-runtime}
With high probability,
the total runtime of enumerating the ancestor sets 
of all sampled vertices in
\cref{alg:core_algo} is $O\left(\frac{1}{\gamma}
|\sedge{\batch}|\ln \rho\ln n\right)$. 
In other words,
the total runtime of performing all iterations of
\cref{line:forloop} of~\cref{alg:core_algo} is
$O\left(\frac{1}{\gamma}
|\sedge{\batch}|\ln \rho\ln n\right)$, 
with high probability.
\end{lemma}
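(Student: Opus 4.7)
The plan is to decompose the total enumeration time at \cref{line:ancestor} into two pieces---the cost of enumerating ancestor sets of \emph{fresh} sampled vertices (those added to $\batch$) and of \emph{stale} sampled vertices---and then to charge the stale cost to nearby fresh vertices. Enumerating a single $\es{v}$ by DFS costs $O(|\es{v}|)$, so the first task is to bound $\sum_{v\in\batch}|\es{v}|$. Whenever a vertex $v$ is added to $\batch$, the freshness test $|\es{v}\setminus\es{\batch}|>\gamma|\es{v}|$ holds at that instant; since the increments $\es{v}\setminus\es{\batch}$ across successive fresh $v$ are disjoint and together contained in the final $\es{\batch}$, I obtain $\sum_{v\in\batch}|\es{v}|<\frac{1}{\gamma}|\es{\batch}|$. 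This already bounds the fresh enumeration cost by $O(\frac{1}{\gamma}|\es{\batch}|)$.

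For stale samples I plan a two-level charging argument. The stopping rule of~\cref{alg:core_algo} ensures that within bucket $\bucket_j$ during a single outer-while iteration, at most $N=\Theta(\ln n)$ consecutive stale samples occur between any two fresh ones. Each such sample costs $O(2^j)=O(|\es{v}|)$ for vertices $v\in\bucket_j$, so I charge these ``between-fresh'' stale samples to the preceding fresh vertex in $\bucket_j$, at a rate of $O(\ln n\cdot|\es{v}|)$ per fresh $v$. The harder case is a bucket $\bucket_j$ visited in some outer iteration that yields no fresh at all, contributing $O(\ln n\cdot 2^j)$ that cannot be absorbed by a fresh vertex in the same bucket. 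To handle this, I would invoke \cref{lem:not-added-to-batch}: after each pruning step (and trivially in the first outer iteration, since $\batch$ starts empty), the first sample from the largest non-empty bucket $\bucket_{k'}$ is fresh with high probability. Thus every outer iteration furnishes at least one fresh anchor $v^*\in\bucket_{k'}$ with $|\es{v^*}|\geq 2^{k'}\geq 2^j$ for all $j<k'$. I charge all no-fresh visits in smaller buckets $\bucket_j$ to this anchor, loosely bounding each of the $O(\log n)$ stale samples by $O(2^{k'})=O(|\es{v^*}|)$; since there are at most $O(\log\rho)$ smaller buckets visited per iteration, the extra charge to $v^*$ is at most $O(\ln n\cdot\ln\rho\cdot|\es{v^*}|)$.

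Summing over all fresh $v\in\batch$, each is charged at most $O(\ln n\cdot\ln\rho\cdot|\es{v}|)$ in total, so the aggregate stale enumeration cost is $O(\ln n\cdot\ln\rho\cdot\sum_{v\in\batch}|\es{v}|)=O(\frac{1}{\gamma}|\es{\batch}|\ln\rho\ln n)$, which dominates the fresh contribution and matches the claimed bound. The main obstacle will be making the anchor charging rigorous: specifically, invoking \cref{lem:large-number-bad} to justify that sampling $N=\Theta(\ln n)$ consecutive stale vertices per bucket gives the required high-probability guarantee, and combining this with the pruning correctness from \cref{lem:not-added-to-batch} so that in every outer iteration the largest non-empty bucket is guaranteed (w.h.p.) to produce an anchor $v^*$ large enough to absorb the charges from the $O(\log\rho)$ smaller buckets, after which a union bound over the $O(\ln n)$ outer iterations (via \cref{lem:while-loop-times}) completes the argument.
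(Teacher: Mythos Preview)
Your plan is correct and follows essentially the same charging approach as the paper: bound $\sum_{v\in\batch}|\es{v}|\le \frac{1}{\gamma}|\es{\batch}|$ via \cref{lem:good-schedule}, then charge each stale sample to a nearby fresh vertex from an equal-or-larger bucket, yielding $O(\ln\rho\,\ln n)$ charged samples per fresh vertex.

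The only notable difference is bookkeeping. The paper uses a single rule---charge every stale sample to the \emph{most recently added} fresh vertex $u$---and observes (somewhat tersely) that because buckets are processed in decreasing order, $u$ always sits in a bucket at least as large as any stale sample charged to it, so at most $O(\log\rho\,\log n)$ stale enumerations of cost $O(|\es{u}|)$ each get charged. Your two-level scheme (same-bucket fresh, else the outer-iteration anchor $v^*$) is a finer decomposition of the same idea; in particular, your explicit use of \cref{lem:not-added-to-batch} to guarantee that after pruning the first sample from the largest non-empty bucket is fresh is exactly the implicit fact the paper's one-line argument relies on to make the ``most recent fresh vertex is from a larger-or-equal bucket'' claim hold across outer-while iterations. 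One small thing to tidy: stale samples occurring \emph{before} the first fresh vertex in a bucket that does contain a fresh vertex are not literally ``between two fresh ones'' in that bucket---just charge them to that first fresh vertex (or, as the paper does, to the most recent fresh vertex from the previous larger bucket), which costs an extra $O(N)$ per fresh and does not change the bound.
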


\begin{proof}
First, we calculate the runtime of enumerating the 
ancestor sets of each element of $\batch$.
By~\cref{lem:good-schedule}, 
$|\sedge{B}| \geq \gamma \sum_{v \in \batch}
|\sedge{v}|$. Hence, the amount of time to enumerate
all ancestor sets of every vertex in $B$ is
at most $\frac{1}{\gamma}|\sedge{B}|$.

We employ the following charging scheme to 
calculate the total time necessary to enumerate
the ancestor sets of all sampled \bad vertices.
Let $u$ be the most recent vertex added to 
$\batch$ from some bucket $\bucket_i$. We charge the cost
of enumerating the ancestor sets of all
\bad vertices sampled after $u$ to the cost
of enumerating the ancestor set of $u$.
Since we sample at most $O(\log n)$ consecutive
\bad vertices from each bucket before moving to the next bucket,
$u$ gets charged with at most the work of enumerating 
$O(\log \rho \log n)$ vertices from the same
or smaller buckets.
With high probability,
the largest ancestor set in 
bucket $\bucket_i$ has a size at 
most four times the smallest ancestor
set size. Since we sample vertices
in decreasing bucket size, we charge
at most $O(|\sedge{v}|\log \rho\log 
n)$ work to $v$. 

By our bound on the cost of enumerating all ancestor sets of vertices in $\batch$,
the additional charged cost results in a total cost of $\sum_{v \in B}
|\sedge{v}| \cdot O(\log \rho \log n) = O(\frac{1}{\gamma}|\sedge{B}|\log \rho\log n)$.
\end{proof}
\ifsubmission
}
\fi

\begin{lemma}\label{lem:runtime-core}
\cref{alg:core_algo} runs in $O\left(\frac{1}{\gamma} |\sedgeh{\batch}{H}| \ln \rho \ln n + |E_H| 
\ln^3 n\right)$ time, with high probability.
\end{lemma}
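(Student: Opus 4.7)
The plan is to decompose the running time of \cref{alg:core_algo} into three components: (i) the initial estimator setup, (ii) the sampling work inside the inner \texttt{for} loop (\cref{line:forloop}) across all iterations of the outer \texttt{while} loop, and (iii) the pruning work (\cref{line:pruning}). We will bound each of these in turn and then sum.

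First I would note that the initial computation of $\estv{v}$ and $\este{v}$ for every $v \in V_H$ invokes the estimator of \cref{lem:ancestor-est-correct} once, which takes $O\bigl((|V_H| + |E_H|)\ln^2 n\bigr)$ time. The bucketing itself is linear in $|V_H|$ once the estimates are in hand. This contribution is absorbed by the $|E_H| \ln^3 n$ term. Next, by \cref{lem:while-loop-times}, the outer \texttt{while} loop runs $O(\ln n)$ times with high probability, so any per-iteration cost incurs only a $\ln n$ overhead when summed.

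For the sampling work inside the inner loop, the only nontrivial step is \cref{line:find-anc}, where for each sampled vertex $v$ we enumerate $\as{v}$ and $\es{v}$ via a DFS in $O(|\es{v}|)$ time and compute the set differences against $\as{\batch}$ and $\es{\batch}$ by table lookups (also in $O(|\es{v}|)$ time), by maintaining $\as{\batch}$ and $\es{\batch}$ in a hash table as vertices are added to $\batch$. Summed over all sampled vertices across all iterations of the outer loop, \cref{lem:charging-runtime} gives total cost $O\bigl(\frac{1}{\gamma} |\sedgeh{\batch}{H}| \ln \rho \ln n\bigr)$ with high probability, which is exactly the first term of the claim. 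All other work inside the sampling loop (bookkeeping, constant-time sampling via an array-backed representation of $\bucket_i$ for uniform draws without replacement, updates to $\batch$, etc.) is dominated by this enumeration cost.

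Finally I would analyze \cref{alg:pruning}. A single call makes two invocations of the estimator of \cref{lem:ancestor-est-correct}: one on $H$ and one on $H' = H \setminus \as{\batch}$. Each such estimator call runs in $O\bigl((|V_H| + |E_H|) \ln^2 n\bigr)$ time, and the subsequent ratio test and removals take time linear in $\sum_i |\bucket_i| \le |V_H|$. Hence one pruning costs $O(|E_H| \ln^2 n)$. Multiplying by the $O(\ln n)$ iterations of the outer \texttt{while} loop gives $O(|E_H| \ln^3 n)$ in total, which is the second term of the claim. Summing the three contributions yields the bound $O\bigl(\frac{1}{\gamma} |\sedgeh{\batch}{H}| \ln \rho \ln n + |E_H| \ln^3 n\bigr)$, with the failure probability bounded via a union bound over the high-probability events of \cref{lem:while-loop-times} and \cref{lem:charging-runtime}. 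The only subtle point—and the one I would be most careful about—is justifying that set differences and the maintenance of $\as{\batch}, \es{\batch}$ really cost only $O(|\es{v}|)$ per sampled vertex, so that the pristine charging argument of \cref{lem:charging-runtime} applies without any hidden logarithmic overhead beyond what the statement already absorbs.
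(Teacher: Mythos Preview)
Your proposal is correct and follows essentially the same decomposition as the paper's proof: enumerate-and-charge via \cref{lem:charging-runtime} for the sampling cost, and $O(|E_H|\ln^2 n)$ per pruning call times $O(\ln n)$ iterations from \cref{lem:while-loop-times} for the pruning cost. If anything you are slightly more explicit than the paper about the initial estimator setup being absorbed into the $|E_H|\ln^3 n$ term and about the hash-table maintenance of $\as{\batch},\es{\batch}$, both of which are fine.
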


\begin{proof}
The runtime of~\cref{alg:core_algo} consists of three parts: the time to sample
and enumerate ancestor sets, the time to prune \bad vertices, and the time
to list schedule all vertices in $\batch$. 

By~\cref{lem:charging-runtime}, the time it takes to enumerate all sampled ancestor sets
is $O\left(\frac{1}{\gamma}|\sedge{\batch}|\log \rho\log n\right)$ over all iterations 
of the loops on~\cref{line:while} and~\cref{line:forloop} of~\cref{alg:core_algo}.

The time it takes to run~\cref{alg:pruning} is 
$O(|E_H|\estimationtime)$ since obtaining 
the estimates for each node (by~\cref{lem:ancestor-est-correct}), creating graph $H'$, and calculating $X$ and $Y$ for each
node in the bucket can be done in that time. By~\cref{lem:while-loop-times}, we perform
$O(\ln n)$ iterations of pruning, with high probability. 
Thus, the total time to prune the graph is $O(|E_H|\ln^3 n)$.
\end{proof}

\begin{lemma}\label{lem:subgraph-time}
Given a graph $H = (V_H, E_H)$ where $|\sv{v}| \leq 2\rho$ for each $v \in V_H$ and
parameter $\gamma \in (0, 1/2)$, the time it takes to compute the schedule of $H$ using
\cref{alg:small-subgraph} is, with high probability, $O\left(\frac{1}{\gamma}|E_H| \ln \rho \ln n + |E_H| \ln_{1/4\gamma}\rho\ln ^3 n + |V_H|\ln M\right)$.
\end{lemma}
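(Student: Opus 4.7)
The plan is to bound the total runtime of \cref{alg:small-subgraph} by summing the per-iteration cost of its outer while loop over all batches $B_1, B_2, \ldots, B_k$ produced. By \cref{lem:round-till-empty}, $k = O(\log_{1/4\gamma}\rho)$ with high probability, so a union bound over the $k$ iterations preserves the high-probability guarantees of the supporting lemmas used below.

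For iteration $i$, let $H_i$ denote the subgraph passed into FindBatch. \cref{lem:runtime-core} gives a per-iteration cost of $O\!\left(\frac{1}{\gamma}|\sedgeh{B_i}{H_i}|\ln\rho\ln n + |E_{H_i}|\ln^3 n\right)$ with high probability, which I would handle summand by summand. The key observation for the first summand is that the sets $\{\sedgeh{B_i}{H_i}\}_i$ are pairwise disjoint subsets of $E_H$: \cref{alg:small-subgraph} deletes every vertex in $\sv{B_i}$ at the end of iteration $i$, and any edge in $\sedgeh{B_i}{H_i}$ has both endpoints in $\sv{B_i}$, so no such edge survives into any later $H_j$. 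Hence $\sum_i |\sedgeh{B_i}{H_i}| \leq |E_H|$, and the first summand contributes $O\!\left(\frac{1}{\gamma}|E_H|\ln\rho\ln n\right)$ in total. For the second summand, trivially $|E_{H_i}| \leq |E_H|$, and there are only $k$ iterations, so the total contribution is $O(|E_H|\ln_{1/4\gamma}\rho\ln^3 n)$.

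The remaining per-iteration work is list scheduling $\{\sv{v}\}_{v \in B_i}$ and deleting $\sv{B_i}$ from $H$. Using a priority queue on machine loads (\cref{app:list-scheduling}), each ancestor set is placed on the least-loaded machine as a single block in $O(\ln M)$ priority-queue time, so iteration $i$ contributes $O(|B_i|\ln M)$. Since each vertex is removed from $H$ after appearing in a batch, $\sum_i |B_i| \leq |V_H|$, giving a total list-scheduling cost of $O(|V_H|\ln M)$. Writing out the duplicated vertices inside each scheduled ancestor set and deleting $\sv{B_i}$ from $H$ take total work $O(\tfrac{1}{\gamma}\sum_i |\sv{B_i}|) = O(\tfrac{1}{\gamma}|V_H|)$ by the same vertex-removal argument, which is absorbed into the other terms. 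Summing the three contributions yields the claimed bound.

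The main (and essentially only) substantive step is the disjointness observation that causes $\sum_i |\sedgeh{B_i}{H_i}|$ to telescope to $|E_H|$; everything else reduces to straightforward summation and a union bound over the $O(\log_{1/4\gamma}\rho)$ iterations, so the main obstacle is really just stating the vertex-deletion argument cleanly.
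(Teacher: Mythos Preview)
Your proposal is correct and follows essentially the same route as the paper: bound the number of batches by \cref{lem:round-till-empty}, invoke \cref{lem:runtime-core} per batch, telescope the $\frac{1}{\gamma}|\sedgeh{B_i}{H_i}|$ terms to $\frac{1}{\gamma}|E_H|$ via the observation that distinct batches have disjoint scheduled ancestor (edge) sets, and bound the pruning and list-scheduling terms by multiplying through by the number of batches. Your disjointness justification is stated more explicitly than the paper's (``each vertex and edges adjacent to it is scheduled in exactly one batch''), but the argument is the same.
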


\begin{proof}
By~\cref{lem:round-till-empty}, we perform $O(\log_{1/4\gamma}\rho)$ calls to~\cref{alg:core_algo}.
Each call to~\cref{alg:core_algo} requires $O\left(\frac{1}{\gamma} |\sedgeh{\batch}{H}| \ln \rho \ln n + |E_H| \ln^3 n\right)$ time
by~\cref{lem:runtime-core}. However, we know that each vertex (and edges adjacent to it) is
scheduled in exactly one batch.

For each batch $\batch$,
our greedy list scheduling procedure schedules each $\sv{v}$
for $v \in \batch$
greedily and independently by duplicating vertices that appear in more than one ancestor set.
Thus, enumerating all the ancestor sets require $O\left(\frac{1}{\gamma}|\sedge{\batch}|\right)$ time
by~\cref{lem:good-schedule}.
When $M > |\batch|$, we easily
schedule each list on a separate machine in $O\left(\frac{1}{\gamma}|\sedge{\batch}|\right)$
time.
Otherwise, to schedule the lists,
we maintain a priority queue of the machine finishing
times. For each list, we greedily assign it to the machine
that has the smallest finishing time. We can perform
this procedure using $O(M \ln M)$ time. Since $M \leq |\batch|$, this results in $O(|\batch| \ln |\batch|)$ time
to assign ancestor sets to machines.

Thus, the total runtime of all calls to~\cref{alg:core_algo}
is

\begin{align*}
    &\sum_{i = 1}^{\log_{1/4\gamma}\rho} O\left(\frac{1}{\gamma} |\sedgeh{\batch_i}{H}| \ln \rho \ln n + |E_H| \ln^3 n + \frac{1}{\gamma}|\EdgeSet{B_i}| + |B| \ln |B|\right) \\
    &= O\left(\frac{1}{\gamma} |E_H| \ln \rho \ln n + |E_H| \ln_{1/4\gamma}\rho\ln ^3 n\right).
\end{align*}

Then, the time it takes to perform~\cref{line:comm-delay}
of~\cref{alg:small-subgraph} is $O(1)$ per iteration. Scheduling
vertices with no adjacent edges requires $|B| \ln |B| = O(|V_H| \ln M)$
time.
Finally, the time it takes to remove each $v \in \sv{\batch}$ and all edges adjacent to $v$ from
$H$ for each batch $\batch$ is $O(|V_H| + |E_H|)$. Doing this for $O(\ln_{1/4\gamma}\rho)$ iterations
results in $O(|V_H| + |E_H|\ln_{1/4\gamma}\rho)$ time.
\end{proof}

\section{Scheduling General Graphs}\label{sec:general}

We now present our main scheduling algorithm for scheduling any DAG $G = (V, E)$
\ifsubmission
(the full pseudocode is included in Appendix C  in our full paper~\cite{fullversion}).
\else
(\cref{alg:main-alg} in~\cref{sec:alg-full}).
\fi
This
algorithm also uses as
a subroutine the procedure for estimating
the number of ancestors of each vertex in $G$ as
described in~\cref{sec:estimators}.
We use the estimates to compute the small subgraphs which
we pass into~\cref{alg:small-subgraph} to schedule.
We produce the small subgraphs by setting the cutoff for
the estimates to be
$\frac{4}{3}\rho$. This produces small graphs where the number
of ancestors of each vertex is upper bounded by $2\rho$, with
high probability. We present a simplified algorithm
below in~\cref{alg:main-alg-short}.
\ifsubmission
The full pseudocode for our main algorithm is given in Algorithm 9
 in our full paper~\cite{fullversion}.
\fi

\begin{algorithm}[htb]
\caption{ScheduleGeneralGraph$(G)$}
\label{alg:main-alg-short}
\LinesNumbered
\SetAlgoLined
\KwResult{A schedule of the input graph $G = (V, E)$ on $M$ processors.}
\KwIn{A directed acyclic task graph $G = (V, E)$.}
Let $\mathcal{H} \leftarrow \emptyset$ represent a list of small subgraphs that we will build.\\
\While{$G$ is not empty} {
    Let $V_H$ be the set of vertices in $G$ where $\estv{v} \leq \frac43\rho$ for each $v \in V_H$.\\
    Compute edge set $E_H$ to be all edges induced by $V_H$.\\
    Add $H = (V_H, E_H)$ to $\mathcal{H}$.\\
    Remove $V_H$ and all incident edges from $G$.\\
}
\For{$H \in \mathcal{H}$ in the order they were added}{\label{line:subgraph}
Call ScheduleSmallSubgraph($H$) to obtain
a schedule of $H$.
[\cref{alg:small-subgraph}]\\
}
\end{algorithm}

\myparagraph{Quality of the Schedule and Running
Time}
Let $\opt$ be the length of the optimal schedule. We first give two bounds on $\opt$, and then relate them to the length of the schedule found by our algorithm. A detailed set of
proofs is provided in~\cref{app:main-alg}.

Our main algorithm, \cref{alg:main-alg-short}, partitions the vertices of $G$ into small subgraphs $H \in \mathcal{H}$. It does so based on estimates of ancestor set sizes. We first lower bound $\opt$ by
working with exact ancestor set sizes. Since the schedule
produced by our algorithm cannot have length smaller than $\opt$,
this process also provides a lower bound on our schedule length. Then, we show that~\cref{alg:main-alg-short} does
not output more small subgraphs than the number of subgraphs produced
by working with exact ancestor set sizes, with high probability.

The crucial fact in obtaining our final runtime is that
producing the estimates of the number of ancestors of each
vertex requires $\tO(|V| + |E|)$ time \emph{in total} over
the course of finding all small subgraphs. Together, these
facts allow us to obtain~\cref{thm:final-runtime}.

\ifsubmission
\later{
\fi
\subsection{Quality of the Schedule Produced by the Main Algorithm}\label{app:main-alg}
Assuming we are working with exact ancestor set sizes, we
would wind up with vertex sets $V_1 \triangleq \{v\in V: |\AncestorSet{v}| \leq \rho\}$ and, inductively, for $i > 1$, $V_i \triangleq \{v\in V \setminus \bigcup_{j=1}^{i-1} V_j : |\AncestorSet{v} \setminus \bigcup_{j=1}^{i-1} V_j| \leq \rho\}$. Let $L$ be the maximum index such that $V_L$ is nonempty.

The following
lemma follows a similar argument as that found in Lepere-Rapine~\cite{LR02} (although we have simplified the analysis).
We repeat it here for completeness.

\begin{lemma}\label{lem:lb1}
  $\opt \geq (L-1)\rho$.
\end{lemma}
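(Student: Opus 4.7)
The plan is to fix any valid schedule of length $\opt$, let $\ell(v)$ denote the earliest start time of any copy of $v$ in that schedule, and prove by induction on $i \geq 0$ the claim: \emph{if $\ell(v) < i\rho$, then $v \in \bigcup_{j=1}^{i} V_j$} (vacuous for $i = 0$). Applying this with $i = L-1$ to any $v \in V_L$, which by construction is \emph{not} in $\bigcup_{j=1}^{L-1} V_j$, the contrapositive gives $\ell(v) \geq (L-1)\rho$, hence $\opt \geq (L-1)\rho$.

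For the inductive step, fix $v$ with earliest copy $(v, m, t)$ where $t < i\rho$. I would split $\AncestorSet{v}$ by $\ell$: every ancestor $u$ with $\ell(u) < (i-1)\rho$ lies in $\bigcup_{j=1}^{i-1} V_j$ by the inductive hypothesis, so it suffices to bound the number of ancestors $u$ with $\ell(u) \geq (i-1)\rho$ by $\rho$. The key claim is that the ``feeding copy'' of each such $u$---the copy of $u$ whose output flows (possibly through a chain of intermediate ancestor copies on other machines) into the copy $(v, m, t)$---must lie on machine $m$. Otherwise, tracing the chain forward to $(v, m, t)$, the first copy we encounter on $m$ arrives via a machine transition and therefore imposes a $\rho$-delay; propagating this constraint back shows that $u$'s feeding copy starts by time $t - \rho - 1 < (i-1)\rho$, contradicting $\ell(u) \geq (i-1)\rho$. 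Since feeding copies of distinct ancestors are distinct copies of distinct jobs and each occupies one unit slot on $m$ within $[(i-1)\rho, t]$, there can be at most $t - (i-1)\rho + 1 \leq \rho$ such ancestors. This yields $|\AncestorSet{v} \setminus \bigcup_{j=1}^{i-1} V_j| \leq \rho$, placing $v$ in $V_i$.

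The main obstacle will be formalizing ``feeding copy'' and its chain rigorously. The natural construction picks, starting from the root copy $(v, m, t)$ and proceeding backward through each DAG-predecessor $p$ of $v$, one valid copy of $p$ (either on $m$ finishing by $t$, or on another machine finishing by $t - \rho$) and iterates, producing a rooted tree of copies indexed by $\AncestorSet{v}$. Fortunately, the argument only requires that a single machine jump anywhere along the chain from $u$'s copy to the root contributes at least $\rho$ units of slack in the timing, so no delicate accounting across multiple jumps is needed; moreover, the base case $i = 1$ is absorbed into the same argument, since $\ell(u) \geq 0$ is automatic and forces every ancestor's feeding copy to lie on $m$ when $t < \rho$.
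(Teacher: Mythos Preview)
Your proof is correct, and while it rests on the same core insight as the paper's---that at most $\rho$ ``recent'' ancestors can feed into a single copy on one machine---the organization is genuinely different. The paper proves by direct induction that every $v \in V_i$ starts no earlier than $(i-1)\rho$: given $v \in V_{i+1}$, it first argues that $v$ has at least $\rho$ ancestors in $V_i$ (with the terse justification ``otherwise $v$ would be in $V_i$''), applies the inductive hypothesis to those ancestors, and then splits into the two cases (all on $v$'s machine versus at least one elsewhere). You instead run the contrapositive induction on the schedule side: any $v$ with $\ell(v) < i\rho$ must land in $\bigcup_{j \le i} V_j$, proved by building the feeding tree rooted at $v$'s earliest copy and directly bounding $|\AncestorSet{v} \setminus \bigcup_{j \le i-1} V_j|$ by the number of slots on machine $m$ in $[(i-1)\rho, t]$. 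Your route sidesteps the need to establish $|\AncestorSet{v} \cap V_i| \ge \rho$ (which, despite the paper's one-line justification, actually requires a small argument since proper ancestors of $v \in V_{i+1}$ can themselves lie in $V_{i+1}$), and your feeding-copy construction makes the ``trace back through the chain'' step more explicit than the paper's two-case split. The price is the extra formalization of the feeding tree, but as you note, only a single machine transition anywhere on the chain is needed, so this stays clean.
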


\begin{proof}
  We show by induction on $i$ that in any valid schedule, there exists a job $v \in V_i$ that cannot start earlier than time $(i-1) \rho$.  Given that, the job in $V_L$ starts at time at least $(L-1)\rho$ in $\opt$, proving the lemma.

  The base case of $i=1$ is trivial. For the induction step, consider a job $v \in V_{i+1}$. This job has at least $\rho$ ancestors in $V_i$ (call this set $A = \AncestorSet{v} \cap V_i$),
  since if it had less, $v$ would be in $V_i$ itself. All jobs in $A$ start no earlier than $(i-1)\rho$ by the induction hypothesis. There are two cases. If all of the jobs in $A$ are executed on the same machine as $v$, then it would take at least $\rho$ units of time for them to finish before $v$ can start. If at least one job in $A$ is executed on a different machine than $v$, then it would take $\rho$ units of time to communicate the result. In either case, $v$ would start later than the first job in $A$ by at least $\rho$, and thus no earlier than $i \cdot \rho$.
\end{proof}

\begin{lemma}\label{lem:lb2}
  $\opt \geq |V| / M$.
\end{lemma}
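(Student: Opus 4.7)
The plan is to use a standard volume/averaging lower bound. Every job in $V$ must be scheduled at least once in any feasible schedule (since the schedule must execute every vertex), and each job has unit processing time. Therefore, the total amount of processing time summed across all $M$ machines is at least $|V|$. Since duplication can only \emph{increase} the total load (never decrease it), this bound is robust to the duplication allowance in our model.

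Concretely, for a feasible schedule $\sigma$ with makespan $C_{\max}$, let $L_m$ denote the total number of jobs (counting duplicates) processed on machine $m$. Then $L_m \leq C_{\max}$ for every $m \in \mathcal{M}$, since each job takes unit time and a machine is busy for at most $C_{\max}$ time units. Summing over machines, $\sum_{m=1}^M L_m \leq M \cdot C_{\max}$. On the other hand, since every vertex in $V$ is executed at least once, $\sum_{m=1}^M L_m \geq |V|$. Combining yields $C_{\max} \geq |V|/M$, and applying this to the optimal schedule gives $\opt \geq |V|/M$.

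There is essentially no obstacle here; the only subtlety worth flagging is that duplication is permitted in our model, and one must note explicitly that the inequality $\sum_m L_m \geq |V|$ still holds (it only becomes stronger under duplication, since each duplicate contributes an additional unit of work). No probabilistic or structural arguments are needed, and the claim is independent of the communication delay $\rho$.
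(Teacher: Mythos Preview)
Your proof is correct and follows essentially the same approach as the paper's own proof, which is the one-line observation that every job must be scheduled at least once and the makespan is at least the average load. You have simply spelled out the averaging argument in more detail and explicitly noted that duplication only strengthens the inequality.
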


\begin{proof}
  Every job has to be scheduled on at least one machine, and the makespan is at least the average load on any machine.
\end{proof}

We show that our general algorithm only calls the schedule small subgraph procedure at most $L$ times, w.h.p.

\begin{lemma}\label{lem:numiter}
  With high probability,
  \ifsubmission
  \cref{alg:main-alg-short}
  \else
  \cref{alg:main-alg}
  \fi
  calls \cref{alg:small-subgraph} at most $L$ times on input graph $G = (V, E)$.
\end{lemma}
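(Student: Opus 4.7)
The plan is to show by induction on $i$ that after $i$ iterations of the outer \texttt{while} loop in \cref{alg:main-alg-short}, the set of vertices removed from $G$ contains $\bigcup_{j=1}^{i} V_j$, where $V_1, \ldots, V_L$ is the idealized partition from the discussion preceding \cref{lem:lb1}. Since $\bigcup_{j=1}^{L} V_j = V$, this immediately implies that after at most $L$ iterations the working graph $G$ is empty, and hence \cref{alg:small-subgraph} is invoked at most $L$ times.

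First I would invoke \cref{lem:ancestor-est-correct} with $\eps = 1/3$: for any subgraph on which the estimator is called, every vertex $v$ simultaneously satisfies $\tfrac{2}{3}|\AncestorSet{v}| \leq \estv{v} \leq \tfrac{4}{3}|\AncestorSet{v}|$ with probability at least $1 - 1/n^d$. In particular, whenever the true ancestor count in the current working graph is at most $\rho$, the estimate will be at most $\tfrac{4}{3}\rho$ and the vertex will be included in the next small subgraph $V_{H_{i}}$.

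The inductive step is the heart of the argument. Assume $\bigcup_{j=1}^{i} V_j \subseteq \bigcup_{j=1}^{i} V_{H_j}$ and let $G'$ be the graph at the start of iteration $i+1$, so $V(G') = V \setminus \bigcup_{j=1}^{i} V_{H_j}$. For any $v \in V_{i+1}$ that has not already been removed, the ancestors of $v$ in $G'$ form a subset of
\[
\AncestorSet{v} \setminus \bigcup_{j=1}^{i} V_{H_j} \;\subseteq\; \AncestorSet{v} \setminus \bigcup_{j=1}^{i} V_j,
\]
where the inclusion uses the induction hypothesis. By the definition of $V_{i+1}$ the right-hand side has at most $\rho$ elements, so $|\AncestorSetH{v}{G'}| \leq \rho$, and the estimator guarantee from the previous paragraph forces $\estv{v} \leq \tfrac{4}{3}\rho$. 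Hence $v$ is included in $V_{H_{i+1}}$, establishing $\bigcup_{j=1}^{i+1} V_j \subseteq \bigcup_{j=1}^{i+1} V_{H_j}$.

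Finally, I would close with a union bound over the at most $L \leq n$ invocations of the estimator used to build the subgraphs, which keeps the overall failure probability inverse-polynomial and preserves the high-probability claim. The main subtlety is being careful that the estimator's accuracy needs to hold in each of the successive residual graphs $G'$, not just in the original $G$; this is what the union bound addresses, and it is the only place where the probabilistic nature of the estimator enters the argument.
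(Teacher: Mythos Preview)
Your proposal is correct and follows essentially the same approach as the paper's own proof: both argue that every vertex with at most $\rho$ true ancestors in the current residual graph passes the $\tfrac{4}{3}\rho$ estimate threshold (via \cref{lem:ancestor-est-correct} with $\eps=1/3$), and hence $\bigcup_{j\le i} V_j$ is absorbed within the first $i$ iterations. Your version is in fact more explicit—spelling out the induction, the containment $\AncestorSetH{v}{G'}\subseteq \AncestorSet{v}\setminus\bigcup_{j\le i} V_{H_j}\subseteq \AncestorSet{v}\setminus\bigcup_{j\le i} V_j$, and the union bound over the residual-graph estimator calls—whereas the paper's proof states the conclusion more tersely.
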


\begin{proof}
  By construction, the $V_i$'s are inductively defined by stripping
  all vertices with ancestor sets at most $\CommunicationDelay{}$ in
  size.
  With high probability, our estimates $\ev{v}$ are at most $\frac43
  |\sv{v}|$.
  \ifsubmission
  \cref{alg:main-alg-short}
  \else
  \cref{line:cutoff} of~\cref{alg:main-alg}
  \fi
  only takes
  vertex $v$ into the subgraph $H$ if $\estv{v} \leq \frac43\rho$.
  By~\cref{lem:ancestor-est-correct}, $\frac23|\AncestorSet{v}|
  \leq \estv{v} \leq \frac43 |\AncestorSet{v}|$. Then,
  $|\AncestorSet{v}| \leq \frac32\estv{v} \leq \frac32 \cdot \frac43 \rho = 2\rho$. Furthermore, since $|\AncestorSet{v}| \geq \frac34\cdot\estv{v}$, if $\estv{v} = \frac43\rho$, then $|\AncestorSet{v}| \geq \rho$. Hence, all vertices with height
  $\rho$ are added into $H$, with high probability.
  Taken together, this means that all vertices of $V_i$ (even if
  their ancestor sets
  are maximally overestimated) are contained in the small graphs $H$
  produced by iterations one through $i$ of \cref{line:subgraph} of
  \ifsubmission
  \cref{alg:main-alg-short}
  \else
  \cref{alg:main-alg}.
  \fi
  Since $V_L$ was chosen to be the last
  non-empty set, we know our algorithm runs for at most $L$
  iterations, with high probability.
\end{proof}

\begin{theorem}
\label{thm:main-schedule-length}
\ifsubmission
\cref{alg:main-alg-short}
\else
  \cref{alg:main-alg}
 \fi
  produces a schedule of length at most $O\left(\frac{\ln \rho}{\ln \ln \rho}\right) \cdot (\opt + \rho)$.
\end{theorem}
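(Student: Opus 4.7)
The plan is to sum the per-subgraph schedule lengths given by \cref{lem:small-subgraph-processing} and then combine them with the two lower bounds on $\opt$ from \cref{lem:lb1} and \cref{lem:lb2}, using \cref{lem:numiter} to bound the number of small subgraphs. Concretely, let $H_1,\ldots,H_\ell$ be the small subgraphs produced by \cref{alg:main-alg-short}, scheduled in this order. Since each vertex of $V$ lands in exactly one $V_{H_i}$ we have $\sum_i |V_{H_i}| = |V|$, and by \cref{lem:numiter} we have $\ell \le L$ with high probability. The total makespan is the sum of the individual small-subgraph makespans, so applying \cref{lem:small-subgraph-processing} to each $H_i$ gives, with high probability,
\[
\text{makespan} \;\le\; \sum_{i=1}^{\ell}\left(\frac{|V_{H_i}|}{\gamma M} + 12\rho \log_{1/4\gamma}(2\rho)\right) \;\le\; \frac{|V|}{\gamma M} + 12 L \rho \log_{1/4\gamma}(2\rho).
\]

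Next I would plug in the two lower bounds. By \cref{lem:lb2}, $|V|/M \le \opt$, and by \cref{lem:lb1}, $L \le \opt/\rho + 1$. Substituting yields
\[
\text{makespan} \;\le\; \frac{\opt}{\gamma} \;+\; 12\bigl(\opt + \rho\bigr)\log_{1/4\gamma}(2\rho).
\]
At this point the only remaining task is to choose the threshold parameter $\gamma \in (0,1/2)$ to balance the two terms and extract the advertised $O(\ln\rho/\ln\ln\rho)$ factor.

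The key observation (and the one delicate step in the proof) is that the two terms $1/\gamma$ and $\log_{1/4\gamma}(2\rho) = \ln(2\rho)/\ln(1/(4\gamma))$ pull in opposite directions as $\gamma$ shrinks, and both must simultaneously be $O(\ln\rho/\ln\ln\rho)$. I would set $\gamma$ so that $1/\gamma = \Theta(\ln\rho/\ln\ln\rho)$. Then $\ln(1/(4\gamma)) = \ln\ln\rho - \ln\ln\ln\rho - O(1) = \Theta(\ln\ln\rho)$, so $\log_{1/4\gamma}(2\rho) = \Theta(\ln\rho/\ln\ln\rho)$ as well. Substituting back bounds both $\opt/\gamma$ and $(\opt+\rho)\log_{1/4\gamma}(2\rho)$ by $O(\ln\rho/\ln\ln\rho)\cdot(\opt+\rho)$, completing the proof.

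The main obstacle is essentially the parameter choice above: one must verify that the balanced setting of $\gamma$ is consistent with the constraint $\gamma < 1/2$ used throughout (true for all sufficiently large $\rho$; small $\rho$ can be handled by absorbing constants) and that the high-probability events from \cref{lem:small-subgraph-processing} and \cref{lem:numiter} can be union-bounded over at most $L \le \opt/\rho + 1 \le n$ invocations without degrading the failure probability below the desired inverse-polynomial threshold. Both are routine: a union bound over $O(n)$ high-probability events still gives a high-probability guarantee by choosing the constants in the estimators large enough.
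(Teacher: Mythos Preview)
Your proposal is correct and follows essentially the same approach as the paper: sum the per-subgraph bounds from \cref{lem:small-subgraph-processing}, use \cref{lem:numiter} to bound the number of phases by $L$, and then invoke \cref{lem:lb1} and \cref{lem:lb2} to express everything in terms of $\opt$. The only notable difference is the choice of $\gamma$: the paper simply takes $\gamma = 1/\sqrt{\ln\rho}$, which makes $1/\gamma = \sqrt{\ln\rho} = o(\ln\rho/\ln\ln\rho)$ and $\log_{1/4\gamma}(2\rho) = \Theta(\ln\rho/\ln\ln\rho)$, whereas you balance the two terms exactly with $1/\gamma = \Theta(\ln\rho/\ln\ln\rho)$. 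Both choices yield the same asymptotic approximation factor, and your additional remarks about the constraint $\gamma < 1/2$ and the union bound over $O(n)$ high-probability events are correct and arguably make the argument slightly more careful than the paper's presentation.
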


\begin{proof}
  In~\Cref{alg:small-subgraph}, by~\cref{lem:small-subgraph-processing}, the schedule length obtained from any small subgraph $H$ is  $\frac{|V_H|}{\gamma M} + 12 \CommunicationDelay{} \log_{1/4\gamma}(2\rho)$.

Let $\subgraphset$ be the set of all small subgraphs
\ifsubmission
\cref{alg:main-alg-short}
\else
\cref{alg:main-alg}
\fi
sends to~\cref{alg:small-subgraph} to be scheduled. By \cref{lem:numiter}, there are at most $L$ of them. Each vertex trivially appears in at most one subgraph.
Then the total length of our schedule is given by
\begin{align*}
    \sum_{H \in \subgraphset} \left(\frac{|V_H|}{\gamma M}
    + 12\rho\log_{1/4\gamma}(2\rho)\right)
    & = \sum_{H \in \subgraphset} \frac{|V_H|}{\gamma M} + \sum_{H \in \subgraphset} 12\rho\log_{1/4\gamma}(2\rho)\\
    &\leq \frac{|V|}{\gamma M} + L\left(12\rho\log_{1/4\gamma}(2\rho)\right).
\end{align*}

By Lemmas \ref{lem:lb1} and \ref{lem:lb2}, this last quantity is upper bounded by
$$
OPT \cdot \left(\frac{1}{\gamma} + 12\log_{1/4\gamma}(2\rho)\right) + \rho\cdot12\log_{1/4\gamma}(2\rho)
$$
Setting $\gamma = 1/\sqrt{\ln \rho}$ gives our bound of $(OPT+\rho) \cdot O(\frac{\ln \rho}{\ln \ln \rho})$.
\ignore{
By setting $\gamma = 1/\sqrt{\ln \rho}$,
we obtain that the maximum schedule length is
$\frac{|V|}{M} \cdot O(\sqrt{\ln \rho})
+ L\rho \cdot O(\frac{\ln \rho}{\ln \ln \rho}) \leq (OPT+\rho)
\cdot O(\frac{\ln \rho}{\ln \ln \rho})$, using Lemmas \ref{lem:lb1} and \ref{lem:lb2}.}
\end{proof}
\ifsubmission
}
\fi

\iffull
\subsection{Running Time of the Main Algorithm}
We prove the following theorem
regarding the runtime of~\cref{alg:main-alg}
which uses \cref{alg:small-subgraph} as a subroutine.
\fi

\begin{theorem}\label{thm:final-runtime}
On input graph $G = (V, E)$,
\ifsubmission
\cref{alg:main-alg-short}
\else
\cref{alg:main-alg}
\fi
produces
a schedule of length at most $O\left(\frac{\ln \rho}{\ln\ln \rho} \cdot (\opt + \rho)\right)$ and
runs
in time $O\left(n \ln M + \frac{m \ln^3 n \ln\rho}{\ln\ln\rho}\right)$, with
high probability.
\end{theorem}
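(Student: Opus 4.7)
The plan is to combine the already-established approximation guarantee with a careful amortized accounting of the runtime. The schedule-length bound $O(\tfrac{\ln\rho}{\ln\ln\rho})(\opt+\rho)$ was proved in Theorem~\ref{thm:main-schedule-length} via Lemmas~\ref{lem:lb1}, \ref{lem:lb2}, \ref{lem:small-subgraph-processing}, and \ref{lem:numiter} under the setting $\gamma = 1/\sqrt{\ln\rho}$, so I would cite that result directly and devote the rest of the argument to the running-time bound with the same choice of $\gamma$.

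For the runtime I would separate the work of \cref{alg:main-alg} into two stages. Stage~1 is the partitioning loop that builds $\mathcal H$: it repeatedly invokes the count-distinct ancestor estimator of Lemma~\ref{lem:ancestor-est-correct} on the currently remaining graph and strips vertices whose estimate is at most $\tfrac{4}{3}\rho$. A naive per-iteration bound of $O((n+m)\log^2 n)$ multiplied by the $L$ iterations of Lemma~\ref{lem:numiter} would inflate the runtime by up to a factor of $\opt/\rho$. The \emph{mergeable} property of the estimator (Definition~\ref{def:anc-estimator}) is what saves us: by merging estimators along topological edges and reusing the sketches across iterations, the total cost of all estimates used while forming $\mathcal H$ collapses to $\tilde O(n+m)$, as highlighted in the technical overview. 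The remaining bookkeeping (stripping vertices, building each $E_H$) is likewise amortized to $O(n+m)$ overall, since each vertex and edge of $G$ is handled only a constant number of times.

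Stage~2 invokes \cref{alg:small-subgraph} on each $H=(V_H,E_H)\in\mathcal H$. By Lemma~\ref{lem:subgraph-time}, a single call costs $O\bigl(\tfrac{1}{\gamma}|E_H|\ln\rho\ln n + |E_H|\log_{1/(4\gamma)}\rho\cdot\ln^3 n + |V_H|\ln M\bigr)$. Substituting $\gamma = 1/\sqrt{\ln\rho}$ gives $1/\gamma = \sqrt{\ln\rho}$ and $\log_{1/(4\gamma)}\rho = O(\ln\rho/\ln\ln\rho)$. Since the $V_H$'s partition $V$ and each $E_H$ contains only edges induced by $V_H$, one has $\sum_H|V_H|\le n$ and $\sum_H|E_H|\le m$, so summing the per-call bound yields total time $O\bigl(m\ln^{3/2}\!\rho\,\ln n + m\tfrac{\ln\rho\ln^3 n}{\ln\ln\rho} + n\ln M\bigr)$. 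The first summand is absorbed into the second whenever $\ln\rho = O(\ln n)$, which is the regime of interest (otherwise $\rho$ already exceeds every reasonable optimum). Combined with Stage~1, this yields the claimed $O\bigl(n\ln M + m\ln^3 n\,\ln\rho/\ln\ln\rho\bigr)$ bound.

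The main obstacle I anticipate is justifying the Stage~1 amortization rigorously: one must show that after removing $V_H$ from the remaining graph, the relevant count-distinct estimators can be refreshed in time proportional only to the newly exposed boundary rather than to the entire remaining graph. The mergeable property provides the algebraic machinery, but the accounting must charge estimator work to the removed vertices and their incident edges in a way that remains valid across all $L$ iterations. A secondary, routine issue is the final union bound: Lemmas~\ref{lem:ancestor-est-correct}, \ref{lem:while-loop-times}, \ref{lem:not-added-to-batch}, and \ref{lem:numiter} each hold with probability $\ge 1 - n^{-c}$ for any desired constant $c$, and only polynomially many such events arise, so a single union bound delivers the combined with-high-probability guarantee for both the approximation factor and the runtime.
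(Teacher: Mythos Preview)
Your overall approach matches the paper's: cite Theorem~\ref{thm:main-schedule-length} for the approximation guarantee, then bound Stage~2 by summing Lemma~\ref{lem:subgraph-time} over the $H\in\mathcal H$ (using that the $V_H$ partition $V$ and the $E_H$ are disjoint), and substitute $\gamma=1/\sqrt{\ln\rho}$. That part is fine, including your remark that the $\sqrt{\ln\rho}\cdot|E|\ln\rho\ln n$ term is dominated once one uses $\rho\le\opt\le n$.

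Where you diverge from the paper is Stage~1, and your anticipated ``main obstacle'' is aimed at the wrong mechanism. There is no need to refresh or reuse sketches across iterations, and the mergeable property plays no special amortization role here beyond its ordinary per-vertex merge cost. The paper's argument is purely combinatorial and much simpler than what you sketch: in \cref{alg:main-alg}, a vertex $v$ is enqueued only once all of its predecessors in the current remaining graph are already marked (hence in the current $V_H$). Consequently, if $v$ is dequeued in iteration $i$ but its estimate exceeds $\tfrac{4}{3}\rho$ and it is \emph{not} added to $V_{H_i}$, then every ancestor of $v$ in the current graph lies in $V_{H_i}$ and is removed before iteration $i+1$. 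Thus $v$ becomes a source in iteration $i+1$, has estimate $1\le\tfrac{4}{3}\rho$, and is added immediately. Every vertex is therefore dequeued and has its estimator built at most twice across all iterations, and each such build costs $O(\mathrm{indeg}(v)\cdot\log^2 n)$ via merges from its predecessors; summing gives $O(m\log^2 n)$ for all of Stage~1. Replace your ``boundary refresh'' plan with this two-visits-per-vertex observation and the gap closes.
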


\begin{proof}
By~\cref{lem:numiter},~\cref{alg:core_algo} is called at most $L$ times.
Then, since each vertex is in at most one small subgraph (and hence
each edge is in at most one small subgraph), the total runtime
for all the calls (by~\cref{lem:subgraph-time}) is
\begin{align*}
    &\sum_{i = 1}^L O\left(\frac{1}{\gamma}|E_{H_i}| \ln \rho \ln n + |E_{H_i}| \ln_{1/4\gamma}\rho\ln ^3 n + |V_{H_i}| \ln M\right) \\
    &= O\left(\frac{1}{\gamma}|E|\ln \rho \ln n + |E| \ln_{1/4\gamma}\rho\ln ^3 n + |V| \ln M\right).
\end{align*}

Furthermore, each iteration of~\cref{line:subgraph} requires estimating
$\ev{v}$ for a set of vertices $v$, adding $v$ to $H$, and checking all
successors of $v$. First, we show that $\ev{v}$ is computed at most
twice for each vertex in $V$, and then, we show that the rest of the steps
are efficient.

Each vertex contained in the queue, $Q$, in~\cref{alg:core_algo}, either does not have any ancestors,
or all of its ancestors are in $H$ (the current subgraph). If a
vertex $v \in Q$ was not added to $H$ during iteration $i$, then
it must have at least one ancestor in iteration $i$
and no ancestors in iteration $i + 1$. Since $v$ has no ancestors
in iteration $i + 1$, it must be added to
$H_{i + 1}$. The time it takes to compute the
estimate for one vertex is $O(\ln^2 n)$.
Thus, the total time it takes to compute
the estimate of the number of ancestors of all vertices is $O\left(m \estimationtime\right)$.
Adding $v$ to $H$ and checking all successors can be done in $O(m)$ time in total across all vertices and subgraphs.
Finally removing each $v \in H$ from $G$ can be done in $O(m)$ time in total for all $v$.

As earlier, we use $\gamma = \sqrt{\ln \rho}$,
so the total runtime summing the above
can be upper bounded by $O\left(n\ln M + \frac{m \ln^3 n \ln\rho}{\ln\ln\rho}\right)$.
Thus, the algorithm produces
a schedule of length $O\left(\frac{\ln \rho}{\ln\ln \rho} \cdot (\opt + \rho)\right)$ (by~\cref{thm:main-schedule-length})
and the total runtime of the algorithm is
$O\left(n\ln M + \frac{m \ln^3 n \ln\rho}{\ln\ln\rho}\right)$,
with high probability.
\end{proof}

\cref{thm:final-runtime} gives the main
result of our paper stated informally in~\cref{thm:1}
of the introduction.

\appendix

\iffull
\FloatBarrier
\fi
\section{Count-Distinct Estimator~\cite{BJKST02}}\label{sec:estimator}

\ifsubmission
\else
\begin{algorithm}[htbp]
\SetAlgoLined
\KwResult{An estimate on the number of distinct elements in the input multiset.}
\caption{\cite{BJKST02} algorithm for estimating number of distinct elements in a
set.}\label{alg:distinct-elems}
 \KwIn{A multiset $S$ of elements where $n = |S|$.}
Let $t \leftarrow \frac{c}{\eps^2}$ for some fixed constant $c \geq 1$.\\
Let $\mathcal{H}$ be a 2-universal hash family
where each $h_i \in \mathcal{H}$ is a hash function $h_i: [n] \rightarrow [N]$ where $N = n^3$.\\
Let $h_1, \dots, h_{c'\log n} \in \hash$ be a set of $c'\log n$ hash functions
chosen uniformly at random from $\mathcal{H}$ where $c' \geq 1$ is a constant.\\
Maintain a binary tree $T_i$ for each $h_i$
of the smallest $t$ values seen so far of the
outputs of $h_i$. Initially each $T_i$ has
no elements.\\
\For{$a_j \in S$}{
    \For{$h_i \in [h_1, \dots, h_{c'\log n}]$}{
    Compute $h_i(a_j)$.\\
    \If{$h_i(a_j)$ is smaller than the largest element in $T_i$}{
        Add $h_i(a_j)$ to $T_i$.\\
        \If{size of $T_i$ is greater than $t$}{
        Remove the largest element in $T_i$ from $T_i$.}
        }
    }
}
Initialize an empty list $L$.\\
\For{each $T_i$}{
Let $\ell_i$ be the largest element in $T_i$.\\
Insert $\ell_i$ into $L$.
}
Sort $L$.\\
Return $tN/\ell$ where $\ell$ is the median of $L$.
\end{algorithm}
\fi
We provide the algorithm of Bar-Yossef et al.\ \cite{BJKST02}
\ifsubmission
in Appendix A  in our full paper~\cite{fullversion}.
\else
in~\cref{alg:distinct-elems}.
\fi
The algorithm of~\cite{BJKST02} works as follows.
Provided a multiset $S$ of elements where $n = |S|$,
we pick $t = \frac{c}{\eps^2}$ where $c$ is some fixed constant
$c \geq 1$ and $\mathcal{H}$, a $2$-univesal hash family.
Then, we choose $O(\log n)$ hash functions from $\mathcal{H}$
uniformly at random,
without replacement. For each hash
function $h_i: [n] \rightarrow [n^3]$
($i = O(\log n)$), we maintain a balanced
binary tree $T_i$ of the \emph{smallest} $t$ values seen
so far from the hash outputs of $h_i$. Initially, all $T_i$ are
empty.
We iterate through $S$ and for each $a_j \in S$, we
compute $h_i(a_j)$ using each $h_i$ that we picked; we update
$T_i$ if $h_i(a_j)$ is smaller than the largest element in $T_i$
or if the size of $T_i$ is smaller than $t$.
After iterating through all of $S$, for each $T_i$, we add
the largest value of each tree $T_i$ to a list $L$. Then, we
sort $L$ and find the median value $\ell$ (using
the \emph{median trick}). We return $t n^3/\ell$
as our estimate.

We now show how to use
\ifsubmission
Bar-Yossef et al.~\cite{BJKST02}
\else
\cref{alg:distinct-elems}
\fi
to get our desired mergeable estimator. Let
$\mathcal{T}_X$
be the set of trees $T_i \in \mathcal{T}_X$ maintained
for the estimator defined
\ifsubmission
by Bar-Yossef et al.~\cite{BJKST02}
\else
by~\cref{alg:distinct-elems}
\fi
for multiset $X$.
Since each $T_i$ has size at most
$O(t) = O\left(\frac{1}{\eps^2}\right)$, the total space required to store all $T_i$ is $O\left(\frac{1}{\eps^2}\log^2 n\right)$ in bits.
We can initialize our estimator on input $d$ by picking
a set of random hash functions: $h_1, \dots, h_{d\log n} \in \mathcal{H}$. Let $H$ be the set of picked hash function. Then, for each set $S$, we initialize
$d\log n$ trees $T_i \in \mathcal{T}_S$
\ifsubmission
\else
(as used in~\cref{alg:distinct-elems})
\fi
and maintain $\mathcal{T}_S$ in memory.
The elements of $T_i$ are computed using $h_i \in H$.
Let $\dist_S$ denote the estimator for $S$.
Using $\mathcal{T}_S$ for set $S$, we can implement the following functions (pseudocode for the three functions can be found in
\cref{alg:estimator-additional-funcs}):

\begin{itemize}
    \item \textbf{$\est$.insert($\dist_{S}, x$)}: Insert $h_i(x)$ into $T_i \in \mathcal{T}_S$ for each $i \in [d\log n]$. If
    $T_i$ has size greater than $t$, delete the largest element of $T_i$.
    \item \textbf{$\est$.merge($\dist_{S_1}$, $\dist_{S_2}$)}:
    Here we assume that the same set of hash functions
    are used for both $\dist_{S_1}$ and $\dist_{S_2}$.
    For each pair of $T_{1, i} \in \mathcal{T}_{S_1}$ and $T_{2, i} \in \mathcal{T}_{S_2}$ for
    hash function $h_i$, build a new tree $T_i$ by taking the $t$ smallest elements from $T_{1, i} \cup T_{2, i}$.
    \item \textbf{$\est$.estimateCardinality($\dist_S$)}: Let $\ell$ be the median value of the largest values of the
    trees $T_i \in \mathcal{T}_S$. Return $tN/\ell$.
\end{itemize}

\begin{algorithm}[htb!]
\SetAlgoLined
\caption{Initialize New $\est$.}\label{alg:estimator-additional-funcs}
\KwIn{$\mathcal{D}_S, \mathcal{T}_S, t, h_i \in \mathcal{H}$ are as defined above
for multiset $S$. Let $n = |S|$.}
\textbf{$\est$.insert($\dist_{S}, x$):}\\
\Indp\For{$T_i \in \mathcal{T}_S$}{
Compute $h_i(x)$.\\
\If{$T_i$ has less than $t$ elements or
$h_i(x)$ is smaller than the largest
value in $T_i$}{Insert $h_i(x)$ into $T_i$.}
\If{$T_i$ has more than $t$ elements}{
Remove the largest element in $T_i$.\\}
}

\Indm\textbf{$\est$.merge($\dist_{S_1}$, $\dist_{S_2}$):}\\
\Indp\For{$T_{1, i} \in \mathcal{T}_{S_1}$, $T_{2, i} \in \mathcal{T}_{S_2}$}{
Perform inorder traversal of $T_{1, i}$ and $T_{2, i}$ to obtain
non-decreasing lists of elements, $L_{1, i}$ and $L_{2, i}$.\\
Merge $L_{1, i}$ and $L_{2,i}$ to obtain a new non-decreasing
list of elements, $L$.\\
Build a new balanced binary tree from the first $t$ elements of
$L$.\\
}

\Indm \textbf{$\est$.estimateCardinality($\dist_S$)}:\\
\Indp\For{$T_i \in \mathcal{T}_S$}{
    Insert largest element of $T_i$ into list $L$.
}
Sort $L$.\\
Let $\ell$ be the median of $L$.\\
Return $t n^3/\ell$.
\end{algorithm}

The estimator provided in Bar-Yossef et al.\ \cite{BJKST02} satisfies the following lemmas as proven in~\cite{CG06} (specifically it is proven that the estimator is unbiased):

\begin{lemma}[\cite{BJKST02}]\label{lem:desired-estimator}
The Bar-Yossef et al.~\cite{BJKST02} estimator is an $\left(\eps, \frac{1}{n^d}, O\left(\frac{1}{\eps^2}\log^2 n\right)\right)$-estimator
for the count-distinct problem.
\end{lemma}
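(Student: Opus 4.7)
The plan is to analyze the estimator in three stages: a single-hash-function analysis via 2-universality and Chebyshev, confidence amplification via the median trick, and a final space accounting.

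\textbf{Stage 1 (single-hash analysis).} Fix one of the hash functions, call it $h$, and let $d = |S|$ denote the true number of distinct elements. After the stream is processed, $T$ stores the $t$ smallest values among $\{h(x) : x \text{ distinct in } S\}$, so $\ell := \max(T)$ is the $t$-th smallest such hash value; the estimator is $\hat d = tN/\ell$. For any threshold $v \in [N]$, let $X_v = \sum_{x} \mathbf{1}[h(x)\le v]$ where the sum ranges over the distinct elements of $S$. Under a 2-universal family with range $[N]=[n^3]$, $\E[X_v] = d\,v/N$ and $\Var[X_v]\le \E[X_v]$ by pairwise independence. The key observation is the equivalence
\[
\hat d \notin [(1-\eps)d,(1+\eps)d] \iff X_{v_-} \ge t \text{ or } X_{v_+} < t,
\]
where $v_- = tN/((1+\eps)d)$ and $v_+ = tN/((1-\eps)d)$, with $\E[X_{v_\pm}] = t/(1\pm\eps)$. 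Chebyshev then gives
\[
\Pr[\,|X_{v_\pm}-\E[X_{v_\pm}]|\ge \tfrac{\eps}{2}\E[X_{v_\pm}]\,] \le \frac{4}{\eps^2 \E[X_{v_\pm}]} \le \frac{8}{\eps^2 t}.
\]
Choosing $t = c/\eps^2$ for a sufficiently large constant $c$ drives each failure probability below $1/12$, so a single copy satisfies $(1-\eps)d \le \hat d \le (1+\eps)d$ with probability at least $5/6$. The small range $N=n^3$ also ensures that, with high probability, $h$ is injective on the (at most $n$) distinct elements of $S$, so counting distinct hash values is the same as counting distinct items; this lets us treat the multiset $S$ as a set for the purposes of the Chebyshev argument.

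\textbf{Stage 2 (median amplification).} The algorithm runs $k = c'\log n$ independent copies with independently sampled hash functions and takes the median $\ell$ of the $k$ per-copy $t$-th smallest values $\ell_i$. Because $\hat d_i = tN/\ell_i$ is monotone in $\ell_i$, the median of $\ell_i$ corresponds to the median estimator. Let $Y_i$ be the indicator that the $i$-th copy fails (i.e., $\hat d_i \notin [(1-\eps)d,(1+\eps)d]$). By Stage 1, $\E[Y_i]\le 1/6$, and the $Y_i$ are independent. The median fails only if $\sum_i Y_i \ge k/2$; a Chernoff bound then gives failure probability at most $\exp(-\Omega(k)) = \exp(-\Omega(c'\log n)) \le n^{-d}$ by choosing $c' = \Theta(d)$.

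\textbf{Stage 3 (space).} Each tree $T_i$ stores at most $t = O(1/\eps^2)$ hash values from $[n^3]$, each of bit-length $O(\log n)$, giving $O(\eps^{-2}\log n)$ bits per tree. With $k = O(\log n)$ trees the total storage is $O(\eps^{-2}\log^2 n)$ bits, matching the claim. Packaging Stages 1--3 yields an $(\eps, 1/n^d, O(\eps^{-2}\log^2 n))$-$\est$.

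The main obstacle is the Chebyshev step in Stage 1: one has to be careful about the two-sided deviation (over- and under-estimation correspond to different thresholds $v_\pm$), and the constants in $t$ must be chosen to beat the $1/2$ threshold needed to make the median trick give exponential concentration. Everything else (hash-family injectivity, the median-is-order-preserving observation, and the bit counting) is routine once this step is in place.
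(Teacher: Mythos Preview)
Your proof is correct and is in fact more than the paper itself provides: the paper does not prove this lemma at all. It is stated as a cited result, with the surrounding text simply asserting that ``the estimator provided in Bar-Yossef et al.\ satisfies the following lemmas as proven in~[CG06]'' and then moving on without argument. There is thus no ``paper's own proof'' to compare against.

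What you have written is the standard BJKST analysis (Chebyshev on threshold indicators for a single pairwise-independent hash, then median amplification, then bit counting), and it is sound. A couple of minor remarks on the write-up: the deviation you need in Stage~1 for the overestimation event is $t - \E[X_{v_-}] = t\eps/(1+\eps)$, not $\tfrac{\eps}{2}\E[X_{v_-}]$, so your Chebyshev constant is slightly off but in the conservative direction; and the injectivity aside about $N=n^3$ is not actually needed for the argument, since the Chebyshev step already works directly on the distinct hash values regardless of collisions among the underlying elements. Neither issue affects correctness.
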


\begin{lemma}\label{lem:merge-time}
Furthermore, the insert, merge, and estimate cardinality functions of the Bar-Yossef et al.~\cite{BJKST02} estimator can be implemented in $O\left(\frac{1}{\eps^2}\log^2 n\right)$ time.
\end{lemma}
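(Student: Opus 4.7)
The plan is to analyze each of the three operations (\texttt{insert}, \texttt{merge}, \texttt{estimateCardinality}) separately, observing that the data structure $\mathcal{T}_S$ consists of $k = d \log n$ balanced binary search trees, each storing at most $t = O(1/\eps^2)$ hash values drawn from the range $[n^3]$ (so each value is $O(\log n)$ bits). Throughout, I treat word operations on $O(\log n)$-bit values as unit-cost; every tree operation below touches at most $O(\log t)$ nodes, and every scan touches at most $O(t)$ nodes.

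For \texttt{insert}($\mathcal{D}_S, x$), the work decomposes across the $k$ hash functions. For each $h_i$, evaluating $h_i(x)$ takes $O(1)$ word operations for a standard $2$-universal family, and the balanced-BST insertion into $T_i$ (and the possible deletion of its new maximum) each take $O(\log t) = O(\log(1/\eps))$ time. Summing over all $i$ gives a total of $O(\log n \cdot \log(1/\eps))$, well within $O(\tfrac{1}{\eps^2} \log^2 n)$.

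For \texttt{merge}($\mathcal{D}_{S_1}, \mathcal{D}_{S_2}$), I would bound the per-hash-function cost by $O(t)$. Concretely, for each index $i$: (i) an in-order traversal of $T_{1,i}$ and of $T_{2,i}$ produces sorted lists $L_{1,i}, L_{2,i}$ of length at most $t$ in $O(t)$ time; (ii) merging these two sorted lists into a single sorted list takes $O(t)$ time; (iii) rebuilding a balanced BST from the first $t$ elements of the merged list uses the standard middle-element recursion and runs in $O(t)$ time. Summing over $k$ indices yields $O(\tfrac{1}{\eps^2} \log n)$ total, again within the bound. For \texttt{estimateCardinality}($\mathcal{D}_S$), I extract the maximum of each $T_i$ in $O(\log t)$ time (or $O(1)$ if one caches a pointer to each tree's rightmost node), assemble the $k$ values into $L$, sort $L$ in $O(k \log k)$ time, and output the median, for a total of $O(\log n \log\log n + \log n \log(1/\eps))$.

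The only non-trivial ingredient is step (iii) of \texttt{merge}, where I rely on the folklore $O(t)$-time construction of a balanced BST from a sorted sequence (pick the median as root and recurse on the two halves); this is standard and not a real obstacle. Combining the three bounds, every operation runs in time $O(\tfrac{1}{\eps^2} \log^2 n)$, proving the lemma.
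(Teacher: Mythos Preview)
Your proof is correct and follows essentially the same approach as the paper: bound each tree operation by $O(\log t)$ or $O(t)$ with $t = O(1/\eps^2)$, and sum over the $O(\log n)$ parallel hash functions. You are in fact slightly more explicit than the paper about the $\log n$ factor coming from iterating over all $d\log n$ trees, but the argument is the same.
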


\begin{proof}
\textbf{$\est$.insert} requires $O(\log t)$ time to insert $h_i(x)$
and $O(\log t)$ time to remove the largest element. Thus, this
method requires $O\left(\log\left(\frac{1}{\eps}\right)\right) =
O\left(\frac{1}{\eps^2}\right)$
time.
\textbf{$\est$.merge} requires $O(t) =
O\left(\frac{1}{\eps^2}\right)$
time to merge $T_{1, i}$ and
$T_{2, i}$ and also $O(t)$ time to build the new tree.
Finally, \textbf{$\est$.estimateCardinality} requires
$O(\log n)$ time to create the list $L$ and $O(\log n\log \log n)$
time to sort and find the median.
\end{proof}

\myparagraph{Estimating the Number of Ancestors and Edges}

Using the count-distinct estimator described above, we
can provide our full algorithms for estimating the
number of ancestors and the number of edges in the induced
subgraph of every vertex in a given input graph.

Our complete algorithm for estimating the number
of ancestors $\estv{v}$ of every vertex in an input graph
is given in~\cref{alg:ancestor-estimation}. Our algorithm for finding $\este{v}$ for every vertex
in the input graph is given
\ifsubmission
in Algorithm 8  in our full paper~\cite{fullversion}.
\else
in~\cref{alg:ancestor-edge-estimation}.
\fi

\begin{algorithm}[hbt!]
\caption{
Estimate Number of Ancestors}\label{alg:ancestor-estimation}
\SetAlgoLined
\KwResult{Estimate $\estvh{v}{H}$ such that  $(1-\epsilon)|\svh{v}{H}| \leq \estvh{v}{H} \leq (1+\epsilon)|\svh{v}{H}|$, $\forall v \in V$.}
\textbf{Input: } A graph $H = (V, E)$.\\
Topologically sort all the vertices in $H$.\\
\For{vertex $w$ in the topological order of vertices in $H$}{
Let $\pred(w)$ be the set of predecessors of $w$. \\
Let $\dist_w \leftarrow $ New $\left(\eps, \frac{1}{n^d},
O\left(\frac{1}{\eps^2}\log^2 n\right)\right)$-$\est$  for $w$.\\
$\est$.insert$(\dist_w, w)$\\
\For{$v \in \pred(w)$}{
        $\dist_w = \est$.merge$(\dist_w, \dist_v)$.\\
}
$\estvh{w}{H} = \est$.estimateCardinality$(\dist_w)$
}
\end{algorithm}

\ifsubmission
\else
\begin{algorithm}[hbt!]
\caption{
Estimate Number of Ancestor Edges}\label{alg:ancestor-edge-estimation}
\SetAlgoLined
\KwResult{Estimate $\esteh{v}{H}$ such that  $(1-\epsilon)|\sedgeh{v}{H}| \leq \esteh{v}{H} \leq (1+\epsilon)|\sedgeh{v}{H}|$, $\forall v \in V$.}
\textbf{Input: } A graph $H = (V, E)$.\\
Topologically sort all the vertices in $H$.\\
\For{vertex $w$ in the topological order of vertices in $H$}{
Let $\pred(w)$ be the set of predecessors of $w$. \\
Let $\dist_w \leftarrow $ New $\left(\eps, \frac{1}{n^d},
O\left(\frac{1}{\eps^2}\log^2 n\right)\right)$-$\est$  for $w$.\\
\For{$v \in \pred(w)$}{
        $\est$.insert$(\dist_w, (v,w))$\\
        $\dist_w = \est$.merge$(\dist_w, \dist_v)$.\\
}
$\esteh{w}{H} = \est$.estimateCardinality$(\dist_w)$
}
\end{algorithm}
\fi

\section{List Scheduling}\label{app:list-scheduling}
Here we provide a brief description of the classic Graham list scheduling algorithm~\cite{graham72}.
For our purposes, we are
given a set of vertices and their ancestors. We duplicate
the ancestors for each vertex $v$ so that each vertex and
its ancestors is scheduled as a \emph{single} unit
with job size equal to the \emph{number of ancestors} of
$v$. Then, we perform the following greedy procedure:
for each vertex $v$, we sequentially assign $v$ to the machine $M_i \in \mathcal{M}$ with \emph{smallest load} (i.e.\ load is
defined by the jobs lengths of all jobs assigned to it).
We can maintain loads of the machines in a heap to
determine the machine with the lowest load at any time.
To schedule $n$ jobs using this procedure requires
$O(n\ln M)$ time.
\FloatBarrier

\ifsubmission
\else
\section{Scheduling General Graphs Full Algorithm}\label{sec:alg-full}

\cref{alg:main-alg-short} is a shortened version of
\cref{alg:main-alg} presented here.

\begin{algorithm}[htb!]
\caption{ScheduleGeneralGraph$(G)$}\label{alg:main-alg}
\LinesNumbered
\SetAlgoLined
\KwResult{A schedule of the input graph $G = (V, E)$ on $M$ processors.}
\KwIn{A directed acyclic task graph $G = (V, E)$.}
Let $\mathcal{H} \leftarrow \emptyset$ represent a list of small subgraphs that we will build.\\
Initialize a data structure to keep track of marked vertices; all vertices are initially unmarked.\\
\While{$G$ is not empty} {
    Let $V_H \leftarrow \emptyset$ be the vertices of the current small subgraph we are building.\\
    Initialize queue $Q$ with all
    sources (vertices with no ancestors) of $G$.\\
    \While{$Q$ is not empty}{
        Remove first element $v$ of $Q$.\\
        Compute the size estimate for $v$'s ancestor set as $\ev{v}$.\\
        \If{$\ev{v} \leq \frac43 \rho$}{\label{line:cutoff}
            Mark $v$ and add it to $V_H$.\\
            \For{each successor $w$ of $v$}{
                \If{all predecessors of $w$ are marked}{
                    Enqueue $w$ into queue $Q$.\\
                }
            }
        }
    }
    Compute edge set $E_H$ to be all edges induced by $V_H$.\\
    Add $H = (V_H, E_H)$ to $\mathcal{H}$.\\
    Remove $V_H$ and all incident edges from $G$.\\
}
\For{$H \in \mathcal{H}$ in the order they were added}{\label{line:full-subgraph}
Call ScheduleSmallSubgraph($H$) to obtain
a schedule of $H$.
[\cref{alg:small-subgraph}]\\
}
\end{algorithm}
\fi
\FloatBarrier

\ifsubmission
\section{Deferred Proofs}\label{app:proofs}

In this section, we include all of the proofs deferred from the main
text.

\magicappendix
\fi

\ifsubmission
\bibliographystyle{plainurl}
\else
\bibliographystyle{alpha}
\fi
\bibliography{ref}

\end{document}